\documentclass[pdflatex,sn-mathphys-ay]{sn-jnl}

 \usepackage{url}
 \usepackage{graphicx}%
\usepackage{multirow}%
\usepackage{amsmath,amssymb,amsfonts}%
\usepackage{amsthm}%
\usepackage{mathrsfs}%
\usepackage[title]{appendix}%
\usepackage{xcolor}%
\usepackage{textcomp}%
\usepackage{manyfoot}%
\usepackage{booktabs}%
\usepackage{array}
\usepackage{algorithm}%
\usepackage{algorithmicx}%
\usepackage{algpseudocode}%
\usepackage{listings}%
\usepackage{verbatim}%
\usepackage{multicol}

\theoremstyle{thmstyleone}%
\newtheorem{theorem}{Theorem}
\newtheorem{proposition}[theorem]{Proposition}%

\newcommand{\ve}{{\bf e}}

\newcommand{\vC}{{\bf C}}
\newcommand{\vX}{{\bf X}}
\newcommand{\vJ}{{\bf J}}
\newcommand{\vI}{{\bf I}}

\newcommand{\vZ}{{\bf Z}}

\newcommand{\vTheta}{\mbox{\boldmath $\Theta$}}

\newcommand{\vvartheta}{\mbox{\boldmath $\vartheta$}}

\newcommand{\vSigma}{\mbox{\boldmath $\Sigma$}}

\newcommand{\valpha}{\mbox{\boldmath $\alpha$}}

\newcommand{\vtheta}{\mbox{\boldmath $\theta$}}

\newcommand{\dkonv}{\stackrel{d}{\rightarrow}}

\newcommand{\bqa}{\begin{eqnarray*}}
\newcommand{\eqa}{\end{eqnarray*}}
\newcommand{\bqan}{\begin{eqnarray}}
\newcommand{\eqan}{\end{eqnarray}}
\newcommand{\bit}{\begin{itemize}}
\newcommand{\eit}{\end{itemize}}
\newcommand{\ben}{\begin{enumerate}}
\newcommand{\een}{\end{enumerate}}
\newcommand{\beq}{\begin{equation}}
\newcommand{\eeq}{\end{equation}}
\newcommand{\bdes}{\begin{description}}
\newcommand{\edes}{\end{description}}

\newtheorem{THEOREM}{Theorem}
\newtheorem{LEMMA}{Lemma}

\theoremstyle{thmstyletwo}%
\newtheorem{example}{Example}%
\newtheorem{remark}{Remark}%

\theoremstyle{thmstylethree}%
\newtheorem{definition}{Definition}%

\begin{document}

\title[Consistent Variable Selection for GARCH-X Models]{Consistent Variable Selection for GARCH-X Models}


\author*{\fnm{Adriano Zanin} \sur{Zambom}}\email{adriano.zambom@csun.edu}

\author{\fnm{Beck} \sur{Saunders}}\email{beck.saunders.186@my.csun.edu}

\affil{\orgdiv{Department of Mathematics}, \orgname{California State University Northridge}, \orgaddress{\street{18111 Nordhoff Street}, \city{Northridge}, \postcode{91330}, \state{CA}, \country{USA}}}


\abstract{In this paper we develop a consistent variable selection procedure for GARCH-X models that identifies the truly relevant exogenous covariates influencing volatility dynamics. The proposed method is based on a multiple hypothesis testing framework with Wald-type test statistics and the Benjamini–Yekutieli False Discovery Rate (FDR) procedure to control the proportion of false discoveries. We establish the consistency of the selection rule, showing that it asymptotically recovers the correct set of covariates as the sample size increases. Monte Carlo simulations across different distributions and dependence structures validate the method’s accuracy and robustness. The procedure is applied to modeling the volatility of the S\&P 500 using macroeconomic and commodity indicators. }

\keywords{GARCH-X models, Variable selection, False discovery rate, Wald-type hypothesis testing, Volatility modeling
}



\maketitle

\section{Introduction}\label{sec.intro}

The introduction of the Autoregressive Conditional Heteroscedasticity (ARCH) model by Engle in his seminal 1982 paper \citep{Engle1982} marked a significant advancement in time series econometrics, particularly for the modeling of financial asset volatility. Since its initial formulation, the ARCH framework has continued to serve as one of the principal analytical tool for capturing the time-varying nature of volatility in financial time series data. This enduring relevance is evidenced by numerous studies in the literature, from theory development to diverse applications \citep[e.g.,][]{BrennerEtAl1996, EnglePatton2001, MayMiguel2002, FabozziEtAl2014, BollerslevEtAl2016, BeyaztasEtAl2018, FrancqZakoian2022}. A significant generalization of the ARCH model was introduced by \cite{Bollerslev1986}, leading to the development of the Generalized Autoregressive Conditional Heteroskedasticity (GARCH) model, which allowed for a more flexible and parsimonious representation of volatility dynamics by incorporating lagged conditional variances.

While the foundational ARCH and GARCH models effectively address the endogenous dynamics of volatility, a growing body of empirical evidence suggests that the volatility of time series processes, especially in financial applications, is often systematically influenced by external, or exogenous, factors \citep{EnglePatton2001}. For instance, recent studies have increasingly highlighted the significant impact of variables such as news intensity on the volatility dynamics observed in stock markets and currency exchange rates \citep{sidorov2014garch, sadik2018news, chua2019information, LingZhu2014}. These studies underscore the importance of considering the role of external information in understanding the fluctuations in financial market uncertainty.

Although covariate effects on ARMA and GARCH models have received increased attention recently \citep[see, e.g.,][and references therein]{KristensenRahbek2005, NanaEtAl2013, GuoEtAl2014, HanKristensen2014, Han2015,FrancqSucarrat2017, DiopKengne2022}, the focus has been mostly on parameter estimation and their asymptotic properties~\citep{Engle2002, ChatterjeeDas2003, HanPark2008, HansenEtAl2012,  SucarratEtAl2016, ZambomGel2020}. Formal inference for  
a class of asymmetric GARCH models with multiple exogenous covariates was established in ~\cite{FrancqThieu2019}. Their approach allows one to perform hypothesis testing for the significance of the exogenous covariates using the asymptotic distribution of the estimators, which is shown to be a projection of the multivariate normal under some regularity conditions.

While some authors have addressed model selection in time series analysis (e.g., Bardet et al., 2020; Diop and Kengne, 2022; Kengne, 2023; Bardet et al., 2023), the available procedures primarily focus on determining the order of the model - such as identifying \(p\) and \(q\) in ARMA(\(p,q\)), GARCH(\(p,q\)) and some more general models.  The aforementioned approaches could, in principle, be extended to specifically address exogenous covariates, however they are not well suited for it. The available methods in the literature mostly rely on penalized model selection criteria that account for model complexity, akin to BIC-type approaches; however, exhaustively testing all possible combinations of exogenous variables with the penalized criteria becomes computationally prohibitive even for a moderate number of covariates.
At a conceptual level, model selection and hypothesis testing are intimately connected. In regression settings, selecting or excluding covariates can be viewed as performing hypothesis tests on the corresponding coefficients. Specifically, omitting variable \(j\) from the model is equivalent to not rejecting the null hypothesis \(H_0^j : \beta_j = 0\). This connection between model checking through hypothesis testing and variable selection has been extensively explored in regression contexts (see, for instance, \cite{AbramovichEtAl2006, KongXia2007, LiLiang2008, WangXia2009, HuangEtAl2010, StorlieEtAl2011, BuneaEtAl2006}).
Nevertheless, to date there is no practical procedure with theoretical results specifically devoted to variable selection among exogenous covariates in volatility models such as ARCH-X and GARCH-X.
To bridge the gap, the primary goal of this paper is to develop a variable selection method for exogenous covariates in GARCH-X models that recovers the covariates that compose the true underlying data-generating mechanism.

 In this paper we show that, based on the Benjamini and Yekuteli (2001) FDR procedure applied to p-values, a model checking procedure, in the context of a GARCH-X model, can be used to construct a consistent variable selection procedure by exploiting the aforementioned conceptual connection between model checking and variable selection. The consistency of the variable selection is formally established by showing that the selected index set is identical to the true set of relevant exogenous covariates with probability tenting to 1 as the sample size increases.
Simulations suggest that the proposed procedure consistently identifies the true set of relevant exogenous covariates with high probability as the sample size increases, offering a favorable trade-off between model complexity and accuracy across various distributional scenarios.

The remainder of the paper is organized as follows.  Section 2 introduces the test-based variable selection method and establishes its consistency. Section 3 presents simulation studies evaluating the finite-sample performance under different settings. Section 4 illustrates the method through an application to S\&P 500 volatility modeling. Section 5 concludes by summarizing the main contributions and emphasizing the practical relevance of the proposed method in achieving parsimonious and interpretable volatility models.

\section{Test Based Variable Selection for GARCH-X models}\label{sec.method}

Consider the GARCH(p,q)-X model

\begin{equation}\label{eqn:GARCHX}
\left\{
\begin{aligned}
\varepsilon_t &= \sigma_tw_t \\
\sigma_t^2(\vtheta) &= \omega + \sum_{i=1}^{p}\alpha_{i}\varepsilon_{t-i}^2 + \sum_{j=1}^{q}\beta_{j}\sigma_{t-j}^2 + \gamma^T\vX_{t-1}
\end{aligned}
\right.
\end{equation}
where $\vX_t = (X_{1,t}, X_{2,t}, \ldots, X_{d,t})$ is a vector of exogenous covariates, and the parameter vector can be denoted as 
\bqa
\vtheta := [\omega, \alpha_{1}, \ldots, \alpha_{p}, \beta_{1}, \ldots, \beta_{q}, \gamma_1, \ldots, \gamma_d]',
\eqa
which belongs to the parameter space $\vTheta \subseteq (0,\infty) \times [0, \infty]^{p+q+d}$.
The squared volatility $\sigma_t^2(\vtheta)$ is the best predictor of $\epsilon_t^2$ given an information set $\mathcal{F}_{t-1}$ available at time $t$, which defines the sigma-field generated by past returns $\{\epsilon_s, s < t\}$.
We assume that $E(\epsilon_t|\mathcal{F}_{t-1}) = \sigma_t^2(\vtheta) > 0$ and consequently that the covariates are almost surely positive and $\alpha_i \geq 0$, $\beta_j \geq 0$, $\gamma_k \geq 0$, and $\omega > 0$, $\forall 1 \leq i \leq p, \forall 1 \leq j \leq q, \forall 1 \leq k \leq d$.

Given data $\epsilon_1, \ldots, \epsilon_n$ and $\vX_{1}, \ldots, \vX_n$ (consider initial values $\epsilon_{1-\max(p,q)}, \ldots, \epsilon_0$, $\vX_0$ are also given), \cite{FrancqThieu2019} showed that the Gaussian QMLE $\hat{\vtheta}_n = \text{arg max}_{\vtheta \in \Theta}L_n(\vtheta)$   
is almost surely consistent for $\vtheta$ as the sample size increases and its asymptotic distribution under some regularity conditions is given by
\begin{align}
    \sqrt{n}(\hat{\vtheta}_n - \vtheta) \dkonv \vZ^\mathcal{C} \text{ as } n \to \infty,
\end{align}
where $\vZ \sim \mathcal{N}(0, \vSigma:=\vJ^{-1}\vI\vJ^{-1})$ and $\vZ^\mathcal{C}$ is the projection of the vector $\vZ$ on $\mathcal{C}$ defined by $\vZ^\mathcal{C} := \text{arg inf}_{\vC\in\mathcal{C}}\sqrt{(\vC - \vZ)^T\vJ(\vC - \vZ)}$,  $\mathcal{C} = \prod_{i=1}^{d}C_i$ is the local parameter space  such that $C_i = [0, \infty)$ if $\vtheta_{i} = 0$ and $\mathbb{R}$ if $\vtheta_{i} \neq 0$,
\begin{align}
    \textbf{J} &:= E\bigg[\frac{\partial^2 \ell_t(\vtheta)}{\partial \vtheta \partial \vtheta^T}\bigg] = E\bigg[\frac{1}{\sigma^4_t(\vtheta)}\frac{\partial \sigma^2_t(\vtheta)}{\partial \vtheta}\frac{\partial \sigma^2_t(\vtheta)}{\partial \vtheta^T}\bigg] \notag \\
    \textbf{I} &:= E\bigg[\bigg\{E[w^4_t | \mathscr{F}_{t-1}] - 1\bigg\}\frac{1}{\sigma^4_t(\vtheta)}\frac{\partial \sigma^2_t(\vtheta)}{\partial \vtheta}\frac{\partial \sigma^2_t(\vtheta)}{\partial \vtheta^T}\bigg],
    \notag
\end{align}
and $\ell_t(\vvartheta_0)$ is the log-QML function at time $t$.

Let $I_d = \{1, \ldots, d\}$ denote the set of indices of the $d$ available exogenous covariates and $I_0 = \{j_1, \ldots, j_{d_0}\} \subseteq I_d$ the (unknown) subset of indices corresponding to the $d_0$ truly relevant covariates, in other words, $I_0 = \{k \in I_d: \gamma_k \neq 0\}$. 
The objective of the proposed variable selection method is to recover the subset $I_0$ with high probability. Thus, we are interested in identifying the set of exogenous covariates with predictive significance for the volatility when it can depend on both endogenous and exogenous factors as in model (\ref{eqn:GARCHX}).

The proposed estimated set of indices of relevant covariates $\hat{I}$ is defined as follows. 
 Consider the hypothesis tests for the parameters corresponding to each exogenous covariate
\begin{equation} \label{Meth_H_test}
\begin{aligned}
    \text{H}_{0}^1 &: \gamma_1 = \ve_1^T\vtheta = 0 \;\;\;\;\;\; \text{ vs } \quad & \text{H}_{1_1}&: \gamma_1  > 0  ,\\
    \text{H}_{0}^2 &: \gamma_2 = \ve_2^T\vtheta = 0  \;\;\;\;\;\; \text{ vs } \quad & \text{H}_{1_2}&: \gamma_2 > 0,  \\
    & \hspace{15mm} \vdots \hspace{5mm} & & \hspace{5mm} \vdots \hspace{5mm} \\
    \text{H}_{0}^d &: \gamma_d = \ve_d^T\vtheta = 0  \;\;\;\;\;\; \text{ vs } \quad & \text{H}_{1_d}&: \gamma_d > 0,  \\
\end{aligned}
\end{equation}
where $\ve_k = [0,0,0, \ldots, 1, \ldots, 0,0,0]^T$ with 1 as the entry corresponding to the $k$-th parameter $\gamma_k$, that is, $\ve_k$ is the $(1 + p + q + k)$-th element of the canonical basis of $\mathbb{R}^{1+p+q+d}$.
The null hypothesis $H_0^k, k = 1, \ldots, d$ can be tested with the test statistic
\begin{align} \label{eq.t_test_stat}
    t_n(k) &= \sqrt{n}\frac{\ve_k^T\hat{\vtheta}_n}{\sqrt{\ve_k^T\hat{\vSigma}_n\ve_k}},
\end{align}
where $\hat{\vSigma}_n = \hat{\textbf{J}}^{-1}_n\hat{\textbf{I}}_n\hat{\textbf{J}}^{-1}_n$
and
\begin{align}
    \hat{\textbf{J}}_n &:= \frac{1}{n}\sum_{t=1}^{n}\frac{1}{\hat{\sigma}^2_t(\hat{\vtheta}_n)}\frac{\partial \hat{\sigma}^2_t(\hat{\vtheta}_n)}{\partial \vtheta}\frac{\partial \hat{\sigma}^2_t(\hat{\vtheta}_n)}{\partial \vtheta^T}
        \notag \\
        \hat{\textbf{I}}_n &:= \frac{1}{n}\sum_{t=1}^{n}(\hat{w_t}^4 - 1)\frac{1}{\hat{\sigma}^2_t(\hat{\vtheta}_n)}\frac{\partial \hat{\sigma}^2_t(\hat{\vtheta}_n)}{\partial \vtheta}\frac{\partial \hat{\sigma}^2_t(\hat{\vtheta}_n)}{\partial \vtheta^T},
\end{align}
with $\hat{w}_t = \varepsilon_t / \hat{\sigma}_t(\hat{\vtheta}_n)$ (see \cite{FrancqThieu2019} for more details). Here $\ve_k^T\hat{\vSigma}_n\ve_k$ estimates $\sigma_{kk} := \ve_k^T\vSigma\ve_k = \ve_k^T\vJ^{-1}\vI\vJ^{-1}\ve_k = \sum_{i=1}^d \sum_{j=1}^d (\mathbf{J}^{-1})_{k i} \, \mathbf{I}_{i j} \, (\mathbf{J}^{-1})_{j k}$, the variance of the estimator $\ve_k^T\hat{\vtheta}_n$. To show consistency of the variable selection method described below, we require the following condition.

(C1) Assume that $\sigma_{kk} \to 0$ and that $\sigma_{kk} < n/log(n)$ for all $k \in 1, \ldots, d$.

\noindent 
Condition (C1) ensures that the asymptotic variances of the estimators decrease sufficiently fast, while remaining well-controlled relative to the sample size. In particular, this restriction prevents excessive collinearity among the regressors and guarantees that the estimators retain adequate precision as $n$ increases. Consequently, (C1) regulates both the scale of the estimator's variance and its sensitivity to correlations between predictors (see \cite{BuneaEtAl2006}, \cite{Sarkar2023}, \cite{SarkarZhang2025} and references therein for further references on this topic).

Because the QMLE saturates the positivity constraint with probability 1/2 under $H_0^k$, the Wald test statistic $t_n^2(k)$ follows a chi-bar-square distribution (mix of Dirac at 0 and $\chi^2_1$). Consequently, the p-value for the test with rejection region $\{t_n^2(k) > \chi^2_1(1-2\alpha_n)\}$
with asymptotic level $\alpha_n$ is $\pi_k = (1/2)P(\chi_1^2 > t_n^2(k))$, or equivalently, $\pi_k = P(Z > |t_n(k)|)$.
Let $H_{0}^{(k)}$ denote the null hypothesis corresponding to the p-value $\pi_{(k)}$, $k = 1, \ldots, d$, where $\pi_{(1)} \leq \cdots \leq \pi_{(d)}$ denote the ordered p-values. We use the FDR procedure of \cite{BenjaminiHochberg1995} and \cite{BenjaminiYekutieli2001} to compute
\bqan\label{k.of.fdr}
\ell = \max_{i\in I_d}\left\{ k: \pi_{(k)} \leq \frac{k}{d}\frac{\alpha_n}{\sum_{l=1}^{d}l^{-1}}\right\},
\eqan
for a choice of the level $\alpha_n$, and reject the hypotheses $H_{0}^{(k)}$, $k = 1, \ldots, \ell$. If no such $\ell$ exists, no hypotheses are rejected. The proposed variable selection method selects the variables with indices corresponding to the $\ell$ rejected null hypotheses. Thus, $I_0$ is estimated by the set $\hat{I}$ of indices corresponding to the first $\ell$ ordered p-values.

Let $R$ represent the total number of rejected null hypotheses. Specifically, $R$ equals $\ell$ if such a value exists according to condition (\ref{k.of.fdr}), and $R$ is zero otherwise. Let $V$ denote the number of incorrectly rejected hypotheses. We define $Q$, the proportion of false rejections, as the ratio $V/R$ when $R$ is greater than zero, and as zero when no hypotheses are rejected. The false discovery rate (FDR) is then defined as the expected value of this proportion, $E(Q)$. \cite{BenjaminiYekutieli2001} showed that $E(Q) \leq \alpha_n(d-d_0)/d \leq \alpha_n$.

The variable selection procedure, and $\hat I$, are called consistent if $P(\hat{I} = I_0) \rightarrow 1$.
The consistency result presented here allows the significance of the predictors to be diminishing with an increasing sample size $n$. Lemma \ref{lemma:A1} establishes results for the p-values that are used in Lemma \ref{lemma:epsilon} and Theorem \ref{thm.consistency}. Lemma \ref{lemma:epsilon} shows that the smallest p-values from the hypotheses corresponding to the exogenous covariates are, with increasing probability, the p-values corresponding to the truly relevant ones, which is key to establishing the main result of consistency of the proposed method in Theorem \ref{thm.consistency}.

\begin{LEMMA} \label{lemma:A1} Let $t_n^2(k)$ and $\pi_k=(1/2)P(\chi^2_1 > t_n^2(k))$ be the test statistic and p-value for testing $H_0^k$. Assume that assumptions A1-A11 in \cite{FrancqThieu2019} and condition (C1) are satisfied. Then:\vspace{-3mm}
\ben \item[a)] For $k \notin I_0$ and any $\xi > 0$, we have $P(\pi_k \leq \xi) = \xi + o(1)$.
\item[b)] For $k \in I_0$, let $\xi_n> 0$, $n\ge 1$. If
\bqa
n^{1/2}\frac{\gamma_{k}}{\sigma_{kk}}\to \infty, \ \mbox{ and }\ \xi_n  = o\left(e^{-nc\frac{\gamma_{k}^2}{\sigma_{kk}^2}}\right) 
\eqa
for some constant $c$, then $P(\pi_k \geq \xi_n) = o(1)$.
\een
\end{LEMMA}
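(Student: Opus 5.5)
Part a) follows from the asymptotic null distribution of \cite{FrancqThieu2019}; part b) follows from a Gaussian tail bound on the $t$-statistic around its drifting mean.

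\textbf{Part a).} Fix $k\notin I_0$, so $\gamma_k=0$ and $C_k=[0,\infty)$. Under assumptions A1--A11 of \cite{FrancqThieu2019} we have $\sqrt{n}(\hat{\vtheta}_n-\vtheta)\dkonv \vZ^{\mathcal C}$.

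We also need $\hat{\vSigma}_n\to\vSigma$ in probability. This follows from the a.s. consistency of $\hat{\vtheta}_n$, the ergodic theorem, and the uniform moment bounds on $\partial\hat\sigma_t^2/\partial\vtheta$ that Francq and Thieu establish. By Slutsky, $t_n(k)\dkonv \ve_k^T\vZ^{\mathcal C}/\sqrt{\sigma_{kk}}$.

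The key fact is the one used in the paper to justify the chi-bar-square form. The $k$-th coordinate of the projection equals $\max(Z_k^*,0)$ for a centred normal $Z_k^*$ with variance $\sigma_{kk}$. I will verify this from the Kuhn--Tucker characterization of the $\vJ$-projection onto a product cone with only the single active constraint in direction $k$. When other coordinates are also on the boundary, I will argue coordinatewise, as Francq and Thieu do for the one-parameter Wald test.

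Hence $t_n^2(k)\dkonv \tfrac12\delta_0+\tfrac12\chi^2_1$. Since $\pi_k=\tfrac12 P(\chi^2_1>t_n^2(k))$ is a continuous decreasing function of $t_n^2(k)$, the continuous mapping theorem gives the limit law of $\pi_k$. That limit is an atom of mass $1/2$ at $\pi=1/2$, together with a uniform law on $(0,1/2)$ with total mass $1/2$. Therefore $P(\pi_k\le\xi)\to\xi$ for every $\xi<1/2$, which is the regime relevant for the FDR thresholds. For $\xi\ge 1/2$ the statement should be read as $P(\pi_k\le \xi)\ge \xi+o(1)$; I would state this caveat, since only small $\xi$ is used later.

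\textbf{Part b).} Fix $k\in I_0$, so $\gamma_k>0$ is interior. Write
\[
t_n(k)=\sqrt n\,\frac{\gamma_k}{\sqrt{\hat\sigma_{kk}}}+\sqrt n\,\frac{\ve_k^T\hat{\vtheta}_n-\gamma_k}{\sqrt{\hat\sigma_{kk}}}=:A_n+B_n .
\]
Since $\sigma_{kk}\to 0$, for large $n$ we have $\sqrt{\sigma_{kk}}\ge\sigma_{kk}$. So $n^{1/2}\gamma_k/\sigma_{kk}\to\infty$ implies $A_n\to\infty$, once $\hat\sigma_{kk}/\sigma_{kk}\to1$ in probability. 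Meanwhile $B_n=O_P(1)$ by asymptotic normality. Hence $P(t_n(k)\ge A_n/2)\to1$.

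On that event, the Mills-ratio bound $P(Z>x)\le e^{-x^2/2}$ gives
\[
\pi_k=P(Z>|t_n(k)|)\le \exp\{-n\gamma_k^2/(8\hat\sigma_{kk})\}.
\]
Using $\sigma_{kk}\le \sigma_{kk}^{1/2}$ for small $\sigma_{kk}$, this is at most $\exp\{-c\,n\gamma_k^2/\sigma_{kk}^2\}$ with $c=1/8$ up to the ratio $\hat\sigma_{kk}/\sigma_{kk}$, which we absorb by taking any $c<1/8$ on a high-probability event. In the same direction, I would use (C1) with $\sigma_{kk}<n/\log n$ to keep the exponent bounded away from degenerate cases. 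Then choosing $\xi_n=o(e^{-nc\gamma_k^2/\sigma_{kk}^2})$ gives, I claim, $P(\pi_k\ge\xi_n)\le P(t_n(k)<A_n/2)+P(\pi_k\ge \xi_n,\ t_n(k)\ge A_n/2)$, and both terms are $o(1)$.

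\textbf{Expected obstacle.} The delicate point is this last comparison, in two respects.
\begin{itemize}
\item $\xi_n$ is \emph{smaller} than the bound $e^{-cn\gamma_k^2/\sigma_{kk}^2}$, so the claimed inequality needs the opposite direction. I will try to repair it by taking the bound with a larger constant $c'$ in the exponent, so that $\pi_k\le e^{-c'n\gamma_k^2/\sigma_{kk}^2}=o(\xi_n)$. Alternatively, I will interpret the lemma's $c$ as being chosen relative to the Gaussian constant. This is the step I expect to require reconciling constants.
\item The other technical step is ratio-consistency $\hat\sigma_{kk}/\sigma_{kk}\to1$ when $\sigma_{kk}$ is allowed to vary with $n$. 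I would derive it from consistency of $\hat{\vJ}_n$ and $\hat{\vI}_n$ together with (C1).
\end{itemize}
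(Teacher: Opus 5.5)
Your route is the same as the paper's: the Francq--Thieu null limit for (a), and a drift decomposition of $t_n(k)$ plus Gaussian tail asymptotics for (b). However, the step you flag at the end of (b) is a genuine gap that neither of your repairs can close, because (b), as stated, is false. The hypothesis $\xi_n=o(e^{-cn\gamma_k^2/\sigma_{kk}^2})$ bounds $\xi_n$ only from above, whereas $P(\pi_k\ge\xi_n)=P(|t_n(k)|\le\Phi^{-1}(1-\xi_n))$ only grows as $\xi_n$ shrinks. For example, $\xi_n=\exp(-n^2\gamma_k^2/\sigma_{kk}^2)$ satisfies the hypothesis for every $c$, and $\Phi^{-1}(1-\xi_n)\sim\sqrt2\,n\gamma_k/\sigma_{kk}$. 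Meanwhile $t_n(k)=\mu_n(1+o_P(1))$ with $\mu_n:=\sqrt n\gamma_k/\sqrt{\sigma_{kk}}$, and the ratio $\Phi^{-1}(1-\xi_n)/\mu_n\sim\sqrt{2n/\sigma_{kk}}$ exceeds $\sqrt{2\log n}$ by (C1). Hence $P(\pi_k\ge\xi_n)\to1$.

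Your ``larger $c'$'' fix fails for two reasons. First, an upper bound on $\xi_n$ can never make anything $o(\xi_n)$. Second, the bound on $\pi_k$ is not available: $-\log\pi_k=\tfrac12 n\gamma_k^2\sigma_{kk}^{-1}(1+o_P(1))$, which is eventually below $c'n\gamma_k^2/\sigma_{kk}^2$ for every $c'>0$ once $\sigma_{kk}\to0$. Concretely, your step $e^{-n\gamma_k^2/(8\hat\sigma_{kk})}\le e^{-cn\gamma_k^2/\sigma_{kk}^2}$ with $c=1/8$ requires $\sigma_{kk}\ge1$, the opposite of the regime you invoke. Likewise, $\sqrt{\sigma_{kk}}\ge\sigma_{kk}$ makes $A_n$ the \emph{smaller} of $\sqrt n\gamma_k/\sqrt{\sigma_{kk}}$ and $\sqrt n\gamma_k/\sigma_{kk}$. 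So it cannot transfer divergence to $A_n$; $A_n$ diverges here only because $\gamma_k>0$ is fixed.

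What your $A_n+B_n$ decomposition does prove is the version with a \emph{lower} bound on $\xi_n$. Suppose $\mu_n\to\infty$ and $\log(1/\xi_n)\le c\,\mu_n^2$ for some $c<1/2$, i.e.\ $\xi_n\ge e^{-cn\gamma_k^2/\sigma_{kk}}$. Then $1-\Phi(x)\le e^{-x^2/2}$ gives $\Phi^{-1}(1-\xi_n)\le\sqrt{2\log(1/\xi_n)}\le\sqrt{2c}\,\mu_n$. Hence $P(\pi_k\ge\xi_n)\le P(A_n+B_n\le\sqrt{2c}\,\mu_n)\to0$.

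Following the paper will not help, since its argument has the same defect. Under its choice $\xi_n=o(e^{-n\gamma_k^2/\sigma_{kk}^2})$ one has $\Phi^{-1}(1-\xi_n)\ge(\sqrt2+o(1))\sqrt n\gamma_k/\sigma_{kk}$. Its displayed bound $\Phi(\Phi^{-1}(1-\xi_n)-\sqrt n\gamma_k/\sigma_{kk})-\Phi(\cdots)+o(1)$ therefore tends to $1$, not $0$. The estimate $\Phi^{-1}(1-\xi_n)=O(\sqrt{-\ln\xi_n})$ helps only when combined with an upper bound on $-\ln\xi_n$.

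In (a) your computation is right and sharper than the paper's, whose display only gives $P(\pi_k\le\xi)\le2\xi+o(1)$. Your caveat for $\xi\ge1/2$ is also correct, since $\pi_k\le1/2$ forces $P(\pi_k\le\xi)=1$ there. The step that would not go through is ``arguing coordinatewise'' when several $\gamma_j$ vanish. Let $j\neq k$ be another zero coordinate with $(\mathbf J^{-1})_{kj}\neq0$. On the region where only the constraint $c_j\ge0$ binds, the $k$-th coordinate of $\mathbf Z^{\mathcal C}$ equals $Z_k-Z_j(\mathbf J^{-1})_{kj}/(\mathbf J^{-1})_{jj}$, not $Z_k$. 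Its atom at $0$ then generally has mass different from $1/2$; in the two-parameter case with $\mathbf J_{12}>0$ it exceeds $1/2$. So the $\tfrac12\delta_0+\tfrac12\chi^2_1$ limit, and with it $P(\pi_k\le\xi)\to\xi$, requires either that $k$ be the only zero coordinate or that the zero coordinates be decoupled in $\mathbf J^{-1}$. The paper relies on the same point by citing Francq and Thieu without addressing it.
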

\begin{proof}
a) Let $\Psi_1(x)$ denote the cumulative distribution function of the $\chi^2_1$ distribution at $x$ and note that the p-value $\pi_k = (1/2)(1 - \Psi_1(t_n^2(k)))$. Under $H_0^k$, $k \notin I_0$ and
\bqa
P(\pi_k \leq \xi) &=& P\left(\frac{1}{2}(1 - \Psi_1(t_n^2(k))) \leq \xi\right) = P\left(\Psi_1(t_n^2(k)) \geq 1 - 2\xi\right)\\
&=&P\left(t_n^2(k) \geq \Psi_1^{-1}(1 - 2\xi)\right) \leq 2\xi + o(1),
\eqa
where the inequality follows from Collorary 1 in \cite{FrancqThieu2019} under $H_0^k$.

b) For $k \in I_0$ we have that $\gamma_k \neq 0$ so that

\bqa
P(\pi_k \geq \xi_n) &=& P\left(\frac{1}{2}(1 - \Psi_1(t_n^2(k))) \geq \xi\right) = P\left(1 - \Phi(|t_n(k)|) \geq \xi_n\right)\\
&=&P\left(|t_n(k)| \leq \Phi^{-1}(1-\xi_n)\right)\\
&=&P\left(\left|\sqrt{n}\frac{\ve_k^T\hat{\vtheta}_n}{\sqrt{\ve_k^T\hat{\vSigma}_n\ve_k}}\right| \leq \Phi^{-1}(1-\xi_n)\right)\\
&=&P\left(\left|\sqrt{n}\frac{\ve_k^T\hat{\vtheta}_n - \ve_k^T\vtheta_n}{\sqrt{\ve_k^T\hat{\vSigma}_n\ve_k}} + \sqrt{n}\frac{\ve_k^T\vtheta_n}{\sqrt{\ve_k^T\hat{\vSigma}_n\ve_k}}\right| \leq \Phi^{-1}(1-\xi_n)\right)\\
&\leq&\Phi\left(\Phi^{-1}\left(1 - \xi_n\right) - \frac{\sqrt{n}\gamma_k}{\sigma_{kk}}\right) - \Phi\left(-\Phi^{-1}\left(1 - \xi_n\right) - \frac{\sqrt{n}\gamma_k}{\sigma_{kk}}\right) + o(1)
\eqa
which converges to 0 by Theorem 2 and Proposition 1 in \cite{FrancqThieu2019} for the choice of  $\xi_n =  o\left(e^{-n\gamma_{k}^2/\sigma_{kk}^2}\right)$,
using the fact that $\Phi^{-1}(1 - \xi_n) = O(\sqrt{- \ln(\xi_n)})$ (\cite{Feller1971}).

\end{proof}

\begin{LEMMA} \label{lemma:epsilon} Let ${\cal E}_n$ denote the event where the smallest $d_0$ p-values are the p-values corresponding to the $d_0$ significant covariates, i.e.,  with $I_0 = \{j_1,\ldots,j_{d_0}\}$,
\bqa
{\cal E}_n = \big[\{\pi_{(1)},\ldots,\pi_{(d_0)}\} = \{\pi_{j_1},\ldots,\pi_{j_{d_0}}\}\big].
\eqa
Then, under the assumptions of Lemma \ref{lemma:A1}, we have that 
\bqa
\lim_{n\rightarrow \infty}P({\cal E}_n) = 1.
\eqa
\end{LEMMA}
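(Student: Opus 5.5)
We prove that $P({\cal E}_n^c)\to 0$ by sandwiching a deterministic threshold sequence between the two groups of p-values. Since $d$ is fixed, $d_0$ and $d-d_0$ are fixed as well. The event ${\cal E}_n$ holds whenever every p-value of a relevant covariate is strictly smaller than every p-value of an irrelevant one, that is, whenever
\bqa
\max_{k\in I_0}\pi_k < \min_{k\notin I_0}\pi_k .
\eqa
If $I_0=I_d$ or $I_0=\emptyset$, then ${\cal E}_n$ holds trivially, so we assume both groups are nonempty. For any positive sequence $\xi_n$ we have the inclusion
\bqa
{\cal E}_n^c \subseteq \bigcup_{k\in I_0}\{\pi_k\geq \xi_n\}\ \cup\ \bigcup_{k\notin I_0}\{\pi_k< \xi_n\}.
\eqa
The union bound then gives
\bqa
P({\cal E}_n^c)\le \sum_{k\in I_0}P(\pi_k\ge\xi_n)+\sum_{k\notin I_0}P(\pi_k<\xi_n).
\eqa

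The first sum is handled by Lemma \ref{lemma:A1}(b). We choose $\xi_n\downarrow 0$ satisfying $\xi_n=o\big(e^{-nc\gamma_k^2/\sigma_{kk}^2}\big)$ simultaneously for all $k\in I_0$. Concretely, we take $\xi_n=\min_{k\in I_0}e^{-nc\gamma_k^2/\sigma_{kk}^2}/n$, which is a minimum over finitely many terms. It tends to $0$ because $n^{1/2}\gamma_k/\sigma_{kk}\to\infty$ forces the exponent to diverge. Lemma \ref{lemma:A1}(b) then yields $P(\pi_k\ge\xi_n)=o(1)$ for each $k\in I_0$. Since the sum has finitely many terms, it is $o(1)$.

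The second sum is the main obstacle. Lemma \ref{lemma:A1}(a) is stated only for a fixed $\xi>0$, whereas our $\xi_n\to 0$. We use a monotonicity argument. Fix any $\xi>0$. For $n$ large we have $\xi_n\le\xi$, and hence
\bqa
P(\pi_k<\xi_n)\le P(\pi_k\le\xi).
\eqa
By Lemma \ref{lemma:A1}(a), or by the bound $2\xi+o(1)$ from its proof, this gives
\bqa
\limsup_n P(\pi_k<\xi_n)\le 2\xi .
\eqa
Because $\xi$ is arbitrary, $P(\pi_k<\xi_n)\to0$ for every $k\notin I_0$. This step uses only that the null p-values have a non-degenerate limit law, namely the chi-bar-square law from Corollary 1 of \cite{FrancqThieu2019}. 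That law places mass $1/2$ on $\pi_k=1/2$ and is continuous below $1/2$, so $\pi_k$ does not concentrate near $0$.

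Combining the two sums gives $\limsup_n P({\cal E}_n^c)\le 2(d-d_0)\xi$ for every $\xi>0$, and therefore $P({\cal E}_n)\to1$.
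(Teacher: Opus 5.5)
Your proof is correct and follows essentially the paper's argument. Both proofs separate the relevant and irrelevant p-values by a threshold, control the nulls with Lemma~\ref{lemma:A1}(a) and the non-nulls with Lemma~\ref{lemma:A1}(b) via a union bound, and then let $\xi$ be arbitrary. The only difference is bookkeeping: the paper fixes $\xi$ and sums $P(\pi_i<\pi_j)$ over the $d_0(d-d_0)$ pairs, tacitly applying Lemma~\ref{lemma:A1}(b) at a fixed level, whereas your vanishing threshold $\xi_n$ meets the hypothesis of part (b) exactly and moves the monotonicity step to the null side, where you justify it explicitly.
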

\begin{proof}
Let $\xi$ be any number between 0 and 1, and write
\bqa
P({\cal E}_n^c) & \leq & \sum_{j \in I_0}\sum_{i \notin I_0}P(\pi_i < \pi_j) \nonumber \\
&=& \sum_{j \in I_0}\sum_{i \notin I_0}\big[P([\pi_i < \pi_j] \cap [\pi_i \leq \xi])  + P([\pi_i < \pi_j] \cap [\pi_i > \xi])\big]\nonumber \\
&\leq& \sum_{j \in I_0}\sum_{i \notin I_0}\left[P(\pi_i \leq \xi)  + P(\pi_j > \xi)\right]
 \leq  \sum_{j \in I_0}\sum_{i \notin I_0}\left[\xi + o(1)\right] \ \ \mbox{(by Lemma \ref{lemma:A1})} \nonumber \\
&=& d_0(d - d_0)\xi + o(1).  \nonumber
\eqa
Since $\xi$ is arbitrary, this shows that $\lim_{n \rightarrow \infty}P({\cal E}_n^c) = 0$, completing the proof.
\end{proof}

\begin{THEOREM} \label{thm.consistency}
With $\alpha_n$ the chosen bound of FDR (or Bonferroni), assume that $\alpha_n \to 0$ as $n \to \infty$ in such a way that
$\alpha_n = o(e^{-cn})$ for some constant c. 
Then, under assumptions of Lemma \ref{lemma:A1}, $$\lim_{n\to \infty}P(\hat{I} = I_0) = 1.$$
\end{THEOREM}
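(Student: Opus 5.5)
The strategy is to split the event $\{\hat I \neq I_0\}$ into two failure modes, relevant covariates missed and irrelevant covariates selected, and to bound each on the event ${\cal E}_n$ of Lemma \ref{lemma:epsilon}. On ${\cal E}_n$ the $d_0$ smallest p-values belong to $I_0$, and $\hat I$ consists of the indices of the first $\ell$ ordered p-values. Hence on ${\cal E}_n$ we have $\hat I = I_0$ if and only if $\ell = d_0$. We write
\[
P(\hat I \neq I_0) \le P({\cal E}_n^c) + P({\cal E}_n \cap \{\ell < d_0\}) + P({\cal E}_n \cap \{\ell > d_0\}).
\]
The first term is $o(1)$ by Lemma \ref{lemma:epsilon}. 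Set $D_d = \sum_{l=1}^d l^{-1}$ and the thresholds $\tau_k = k\alpha_n/(dD_d)$.

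\textbf{Step 1: no relevant covariate is missed.} Since $\ell$ is defined as a maximum, $\ell \ge d_0$ whenever $\pi_{(d_0)} \le \tau_{d_0}$. On ${\cal E}_n$ we have $\pi_{(d_0)} = \max_{j\in I_0}\pi_j$, so
\[
P({\cal E}_n\cap\{\ell<d_0\}) \le \sum_{j\in I_0} P(\pi_j > \tau_{d_0}).
\]
We apply Lemma \ref{lemma:A1}(b) with $\xi_n = \tau_{d_0} = d_0\alpha_n/(dD_d)$, which is a fixed multiple of $\alpha_n$. Since $\alpha_n = o(e^{-cn})$, we need $e^{-cn} = O(e^{-nc'\gamma_j^2/\sigma_{jj}^2})$ for a suitable constant $c'$. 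This holds once the constant in Lemma \ref{lemma:A1}(b) is adjusted, using that $\gamma_j^2/\sigma_{jj}^2$ is bounded below; (C1) gives $\sigma_{jj}\to 0$, so the ratio actually grows. I expect this matching of rates to be the delicate point. The exponent in Lemma \ref{lemma:A1}(b) involves $\gamma_j^2/\sigma_{jj}^2$, while the theorem only assumes a rate $e^{-cn}$, so the constant $c$ has to be chosen relative to $\inf_j \gamma_j^2/\sigma_{jj}^2$. I would try to absorb this by taking $c$ in the theorem at least $c'\max_j \gamma_j^2/\sigma_{jj}^2$, interpreted appropriately, and invoking the lemma with $\xi_n=\tau_{d_0}$. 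Then each summand is $o(1)$.

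\textbf{Step 2: no irrelevant covariate is selected.} The event $\ell > d_0$ requires $\pi_{(k)} \le \tau_k \le \alpha_n$ for some $k>d_0$. On ${\cal E}_n$, the p-value $\pi_{(k)}$ for $k>d_0$ corresponds to some $i\notin I_0$. Therefore
\[
P({\cal E}_n\cap\{\ell>d_0\}) \le \sum_{i\notin I_0} P(\pi_i \le \alpha_n).
\]
Lemma \ref{lemma:A1}(a) is stated for fixed $\xi$. Because $\alpha_n\to 0$, for any fixed $\xi>0$ we have $\alpha_n\le\xi$ eventually, and monotonicity gives $P(\pi_i\le\alpha_n)\le P(\pi_i\le\xi) = \xi+o(1)$. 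Hence the $\limsup$ of this term is at most $(d-d_0)\xi$, and letting $\xi\downarrow 0$ makes it vanish.

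Combining the three bounds gives $P(\hat I\neq I_0)\to 0$. The Bonferroni variant is handled in the same way, replacing the thresholds $\tau_k$ by $\alpha_n/d$.
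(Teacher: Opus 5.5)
The overall structure of your argument is fine and matches the paper's, but Step~1 does not go through, and it cannot be repaired by adjusting constants.

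\textbf{What is sound.} You split on ${\cal E}_n$ and note that on ${\cal E}_n$ one has $\hat I=I_0$ exactly when $\ell=d_0$. This spares you the paper's detour through Lemma~2.1 of Bunea et al.\ to control $V$. Your Step~2 is correct, and the monotonicity argument properly handles the fact that Lemma~\ref{lemma:A1}(a) is stated only for fixed $\xi$.

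\textbf{Why your rate matching fails.} It runs the wrong way. The requirement $e^{-cn}=O(e^{-nc'\gamma_j^2/\sigma_{jj}^2})$ with $c'>0$ means $\gamma_j^2/\sigma_{jj}^2\le c/c'+o(1)$, which is an \emph{upper} bound on the ratio. A lower bound on the ratio does not help. If the ratio grows, as you infer from (C1), the requirement fails for every fixed $c'>0$.

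More fundamentally, $P(\pi_j>\tau_{d_0})$ is nonincreasing in $\tau_{d_0}$. So no hypothesis that only forces $\alpha_n$ to be small can drive it to zero. Lemma~\ref{lemma:A1}(b) has the same reversal, since its hypothesis $\xi_n=o(\cdot)$ is again only an upper bound on $\xi_n$. Citing it therefore does not close the step.

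\textbf{A concrete failure.} Take $\alpha_n=e^{-n^2}$, which satisfies $\alpha_n=o(e^{-cn})$ for every $c$. Suppose $|t_n(j)|=O_p(\sqrt n)$ for $j\in I_0$, which the paper's proof itself asserts. Then for each $\epsilon>0$ there is $C$ with $P(\pi_j\ge e^{-Cn})\ge 1-\epsilon$ for large $n$. Since $e^{-Cn}$ eventually exceeds every threshold $\tau_k\le\alpha_n$, combining this with your Step~2 gives $\ell=0$, i.e.\ $\hat I=\emptyset\neq I_0$, with probability tending to one whenever $d_0\ge 1$.

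The paper's proof dismisses this term with the same reasoning (``$t_j=O_p(\sqrt n)$ \dots\ by the choice of $\alpha_n=o(e^{-cn})$''), so it shares the defect. Under the stated hypothesis, Step~1 cannot be closed.

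\textbf{What would fix it.} Besides $\alpha_n\to 0$, you need a \emph{lower} bound on $\alpha_n$. For example, $\log(1/\alpha_n)\le c\,n\min_{j\in I_0}\gamma_j^2/\sigma_{jj}$ for some $c<1/2$, with $\sigma_{jj}=\ve_j^T\vSigma\ve_j$ as in $t_n(j)$. For fixed parameters, $\log(1/\alpha_n)=o(n)$ suffices, e.g.\ $\alpha_n=1/n$.

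With such a condition, argue directly rather than through Lemma~\ref{lemma:A1}(b):
\begin{itemize}
\item From $1-\Phi(z)\le e^{-z^2/2}/2$ you get $\Phi^{-1}(1-x)\le\sqrt{2\log(1/x)}$ for $0<x<1$.
\item For any $\epsilon\in(0,1)$, $|t_n(j)|\ge(1-\epsilon)\sqrt n\,\gamma_j/\sqrt{\sigma_{jj}}$ with probability tending to one, by consistency of $\hat{\vtheta}_n$ and $\hat{\vSigma}_n$.
\end{itemize}
Choosing $\epsilon$ with $(1-\epsilon)^2>2c$, these two facts give $P(|t_n(j)|\le\Phi^{-1}(1-\tau_{d_0}))\to 0$.
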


\begin{proof} 
Note that if the estimator $\hat{I}$ is equal to the set $I_0$, we have exactly $d_0$ rejections ($R = d_0$) with none of them being erroneous ($V = 0$). Therefore, consistency of $\hat{I}$ is verified by proving
\begin{equation}
P(\hat{I} = I_0) = P(R = d_0, V = 0) \rightarrow 1,\  \mbox{as}\  n \rightarrow \infty.
\end{equation}
This follows by showing that both $P(R \neq d_0)$ and $P(V \geq 1)$ are asymptotic negligible. By Lemma 2.1 in \cite{BuneaEtAl2006}, we have that 
\bqa
P(V \geq 1) \leq  P(R \neq d_0) + \frac{d_0(d-d_0)}{d}\alpha_n = o(1)
\eqa
by the choice of $\alpha_n \to 0$. Thus, in order to show consistency of $\hat{I}$ we only need to show that $P(R \neq d_0) \rightarrow 0$. Following \cite{BuneaEtAl2006}, consider the two cases: $\{R > d_0\}$ and $\{R < d_0\}$.

 For $\{R > d_0\}$, at least one false positive is made, then
    \begin{align}
        \exists j \not \in I_0 \text{ such that } \pi_{(k)} \leq \frac{k}{d}\frac{\alpha_n}{\sum_{l=1}^{d}l^{-1}}.
        \notag
    \end{align}
    This event can be written as
    \begin{align}
        \bigg\{\bigcup_{k=d_0+1}^{d}\bigg\{\pi_{(k)} \leq \frac{k}{d}\frac{\alpha_n}{\sum_{l=1}^{d}l^{-1}}\bigg\}\bigg\}.
        \notag
    \end{align}
    Each event in this union corresponds to each of the $d - d_0 - 1$ would be falsely rejected. 
    For $\{R < d_0\}$ there is at least one false negative.  For FDR, this can be expressed  as
    \begin{align}
         \bigg\{\pi_{(d_0)} \geq \frac{d_0}{d}\frac{\alpha_n}{\sum_{l=1}^{d}l^{-1}}\bigg\}
        \notag
    \end{align}

    Now the probability of the event $\{R \neq d_0\}$ can be bounded from above as follows
    \begin{align} \label{eq.R}
        P(\{R \neq d_0\}) &\leq P\bigg(\varepsilon_n\bigcap\bigg\{\pi_{(d_0)} \geq \frac{d_0}{d}\frac{\alpha_n}{\sum_{l=1}^{d}l^{-1}}\bigg\}\bigg)  \nonumber\\ 
        &\hspace{.5cm} + \sum_{k=d_0+1}^{d}P\bigg(\varepsilon_n\bigcap\bigg\{\pi_{(k)} \leq \frac{k}{d}\frac{\alpha_n}{\sum_{l=1}^{d}l^{-1}}\bigg\}\bigg) + P(\varepsilon_n^C)
    \end{align}
    Using Lemma \ref{lemma:epsilon}, it remains to show that the 2 first terms on the right hand side of (\ref{eq.R}) go to 0 as the sample size increases.  The first term on the right hand side of (\ref{eq.R}) can be bounded from above by
    \begin{align}
        P\left(\varepsilon_n\bigcap\left\{\pi_{(d_0)} \geq \frac{d_0}{d}\frac{\alpha_n}{\sum_{l=1}^{d}l^{-1}}\right\}\right) &\leq d_0\cdot \text{max}_{k\in I_0}P\left(\pi_k \geq \frac{d_0}{d}\frac{\alpha_n}{\sum_{l=1}^{d}l^{-1}}\right)
        \notag \\
        &= d_0\cdot\text{max}_{k\in I_0}P\left((1 - \Phi(|t_k|)) \geq \frac{d_0}{d}\frac{\alpha_n}{\sum_{l=1}^{d}l^{-1}}\right)
        \notag \\
        &= d_0\cdot\text{max}_{k\in I_0}P\left(|t_k| \leq \Phi^{-1}\left(1 - \frac{d_0}{d}\frac{\alpha_n}{\sum_{l=1}^{d}l^{-1}}\right)\right)
        \notag \\
        &= o(1) \text{ as } n \to \infty
        \notag
    \end{align}
    because for $k\in I_0$, $\gamma_k\neq 0$, and hence $t_j = O_p(\sqrt{n})$, $\frac{d_0}{d}\frac{\alpha_n}{\sum_{l=1}^{d}l^{-1}} \to 0$ and by the choice of $\alpha_n = o(exp\{-cn\})$. \\
    For the second term on the right hand side of (\ref{eq.R}) we have
    \begin{align}
    \sum_{k=d_0+1}^{d}P\bigg(\varepsilon_n\bigcap\bigg\{\pi_{(k)} \leq \frac{k}{d}\frac{\alpha_n}{\sum_{l=1}^{d}l^{-1}}\bigg\}\bigg) &\leq \sum_{k=d_0+1}^{d}P\bigg(\varepsilon_n\bigcap\bigg\{\pi_{(k)} \leq \frac{\alpha_n}{\sum_{l=1}^{d}l^{-1}}\bigg\}\bigg)
        \notag \\
        &\leq \sum_{k\not \in I_0}P\bigg(\pi_k \leq \frac{\alpha_n}{\sum_{l=1}^{d}l^{-1}}\bigg)
        \notag \\
        &= o(1) \text{ as } n \to \infty,
        \notag
    \end{align}
    since $\pi_k$ has a Uniform distribution under the null hypothesis $H_0^k \gamma_k = 0$ ($k\notin I_0$).
    
    The proof for Bonferroni follows similarly by noting that the cut-off is $\alpha_n/d$ instead of $\alpha_n/\sum_{l=1}^{d}l^{-1}$.

\end{proof}

\section{Simulations}

In this section we investigate the finite sample performance of the proposed method in recovering the true set of significant exogenous covariates through four simulation scenarios. First, in Scenario 1, we consider the GARCH(1,1)-X model
\begin{equation}
\left\{
\begin{aligned}
\varepsilon_t &= \sigma_tw_t \\
\sigma_t^2(\vtheta) &= \omega + \alpha\varepsilon_{t-1}^2 + \beta\sigma_{t-1}^2 + \gamma^T\vX_{t-1}
\end{aligned}
\right. \nonumber
\end{equation}
where $\vX_t = (X_{1,t}, X_{2,t}, \ldots, X_{d,t})$ is the vector of exogenous covariates. Here we set $\omega = 0.1$, $\alpha = 0.2$, $\beta = 0.4$, and consider $w_t$ to be Normal(0,1), Student $t$ with 5 and 7 degrees of freedom in order to evaluate cases with shocks that have heavier tails. We assess the proposed method's performance for $d = 5$ and $d = 8$ exogenous covariates $\vX_{t}$, which are generated as $X_{i, t} = e^{y_{i,t}}$, where $y_{i,t} = (0.2 +i0.01)y_{i,t-1} + e_t$, for $e_t \sim N(0,1)$. The parameter vector $\gamma$ is taken to be $\gamma = (0.25, 0, 0.3, 0.4, 0)$ for the case where $d = 5$ and $\gamma = (0.2, 0, 0.3, 0.3, 0, 0, 0, 0.8)$ for the case where $d = 8$. The choices of the parameter vector corresponding to the exogenous covariates enable us to effectively distinguish between truly significant predictors and those that do not influence the volatility model.

We run 500 Monte Carlo simulations for sample sizes $n = 500, 1000, 5000$, and $10000$, and compute the average number of exogenous covariates correctly selected and correctly excluded, as well as the average number of exogenous covariates incorrectly selected and incorrectly excluded. We also compute the proportion of time out of the 500 simulation runs that each exogenous covariate is selected. Finally, we assess the BIC and AIC of the selected model compared to the null model without any covariates and the full model with all covariates. The results are displayed in Tables \ref{tab.Indep_Corr_Sel} to \ref{tab.AIC}.


In the second simulation Scenario, denoted by Scenario 2, we consider the same data generating process as that of Scenario 1, except that the exogenous covariates are correlated. In this case, we use the absolute value of the random vector $\vX_t$, that is $|\vX_t|$, where $\vX_t$ is generated from a multivariate Normal distribution with mean $\bf{0}$ and covariance matrix $(\Sigma)_{ij} = \exp\{-0.5|i-j|\}$, that is, the covariance between exogenous covariates is an exponential decay. The results after 500 Monte Carlo simulations are displayed in Tables \ref{tab.corr_ExpDecay} to \ref{tab.AIC_ExpDecay}.

 The results suggest that the proposed method is  effective in identifying the truly relevant exogenous covariates across Normal and heavy tailed (Student t with 5 and 7 degrees of freedom) shocks for both 5 and 8 available covariates where 3 and 4 are relevant respectively. As the sample size increases, the method's ability to correctly select the relevant covariates improves significantly: the number of correctly selected and correctly excluded covariates increases with the increase in sample sizes in both the independent and exponential decay scenarios.

Moreover, Tables \ref{tab.Prop_time_d5}, \ref{tab.Prop_time_d8}, \ref{tab.covariates_d5_ExpDecay}, and \ref{tab.covariates_d8_ExpDecay} suggest that, as the sample size increases, the correct exogenous covariates (e.g., 1, 3, 4 for $d=5$; 1, 3, 4, 8 for $d=8$) are identified a high proportion of times in the simulations especially under the Normal shock distribution.
For instance, with a sample size of 5000 under Normal errors and $d=5$, covariates 1, 3, and 4 are selected 100\% of the time, while the irrelevant covariates (2 and 5) are selected in only 2-3\% of the simulations. Heavier-tailed shock distributions ($t_5$ and $t_7$) present a greater challenge, as evidenced by lower selection rates at smaller sample sizes, but the performance still converges towards perfect selection and exclusion as the sample size grows. The method also demonstrates a strong tendency to correctly exclude the irrelevant covariates, which are only selected a small percentage of time due to randomness.

Finally, the assessment of the proposed method using information criteria like BIC and AIC further supports the efficacy of the proposed procedure. In Tables \ref{tab.AIC} and \ref{tab.AIC_ExpDecay}, the average BIC and AIC values for the selected model are presented and compared to the null model (no covariates) and the full model (all covariates). The goal is for the selected model to achieve a favorable trade-off between model fit and complexity. The results suggest that the proposed method has larger AIC and BIC for small sample sizes, which is expected as it may more often miss relevant covariates compared to the full model. However, it
effectively navigates the trade-off with a smaller AIC and BIC for larger sample sizes, suggesting that the selected models are more parsimonious than the full model while capturing the necessary model complexity.

In addition to the Scenarios 1 and 2, a GARCH(2,1)-X model was also assessed. This case, with independent exogenous covariates, is denoted by Scenario 3, and the data is generated similarly to Scenario 1 but with parameters $\omega = 0.1$, $\valpha = (\alpha_1, \alpha_2) = (0.2, 0.15)$, $\beta = 0.4$, and $\gamma$ equal to the ones in Scenario 1. Finally, to assess the case of covariates with dependence, we consider Scenario 4, where we generate the data from the GARCH(2,1)-X model as in Scenario 3, but with covariates having exponentially decaying covariance as in Scenario 2. The results, presented in Tables \ref{tab:GARCH_Indep_correct} to \ref{tab:GARCH_ExpDecay_AIC}, suggest that, similarly to the GARCH(1,1)-X in Scenarios 1 and 2, as the sample size increases, the ability of the proposed method to select the truly relevant predictors and exclude irrelevant ones also improves for a GARCH(2,1)-X. Similarly, the proposed method achieves smaller BIC for larger sample sizes, while maintains an AIC similar to that of the full model, however with a more parsimonious model.

\begin{table}[ht]
\centering
\begin{tabular}{cccrrrr} 
\hline
\multirow{2}{*}{\textbf{d}} & \multirow{2}{*}{\textbf{Shock}} & \multirow{2}{*}{\textbf{Metric}} &
\multicolumn{4}{c}{\textbf{Sample size $n$}} \\
& & & \textbf{500} & \textbf{1000} & \textbf{5000} & \textbf{10000} \\
\hline
\multirow{12}{*}{d = 5}
  & \multirow{4}{*}{Normal}
    & corr. selected   & 1.96 & 2.82 & 3.00 & 3.00 \\
  & & incorr. selected & 0.04 & 0.05 & 0.05 & 0.06 \\
  & & corr. excluded   & 1.96 & 1.95 & 1.95 & 1.94 \\
  & & incorr. excluded & 1.04 & 0.18 & 0.00 & 0.00 \\
\cmidrule(lr){2-7}
& \multirow{4}{*}{t5}
    & corr. selected   & 0.54 & 1.47 & 2.94 & 2.99 \\
  & & incorr. selected & 0.01 & 0.02 & 0.03 & 0.02 \\
  & & corr. excluded   & 1.99 & 1.98 & 1.97 & 1.98 \\
  & & incorr. excluded & 2.46 & 1.53 & 0.06 & 0.01 \\
  \cmidrule(lr){2-7}
  & \multirow{4}{*}{t7}
    & corr. selected   & 0.87 & 2.04 & 3.00 & 3.00 \\
  & & incorr. selected & 0.03 & 0.02 & 0.05 & 0.04 \\
  & & corr. excluded   & 1.97 & 1.98 & 1.95 & 1.96 \\
  & & incorr. excluded & 2.13 & 0.96 & 0.00 & 0.00 \\
\hline
\multirow{12}{*}{d = 8}
  & \multirow{4}{*}{Normal}
    & corr. selected   & 1.20 & 2.27 & 3.99 & 4.00 \\
  & & incorr. selected & 0.02 & 0.03 & 0.07 & 0.07 \\
  & & corr. excluded   & 3.98 & 3.97 & 3.93 & 3.93 \\
  & & incorr. excluded & 2.80 & 1.73 & 0.01 & 0.00 \\
  \cmidrule(lr){2-7}
  & \multirow{4}{*}{t5}
    & corr. selected   & 0.27 & 0.82 & 3.09 & 3.74 \\
  & & incorr. selected & 0.01 & 0.00 & 0.02 & 0.02 \\
  & & corr. excluded   & 3.99 & 4.00 & 3.98 & 3.98 \\
  & & incorr. excluded & 3.73 & 3.18 & 0.91 & 0.26 \\
  \cmidrule(lr){2-7}
  & \multirow{4}{*}{t7}
    & corr. selected   & 0.51 & 1.31 & 3.68 & 3.98 \\
  & & incorr. selected & 0.01 & 0.02 & 0.05 & 0.05 \\
  & & corr. excluded   & 3.99 & 3.98 & 3.95 & 3.95 \\
  & & incorr. excluded & 3.49 & 2.69 & 0.32 & 0.02 \\
\hline
\end{tabular}
\caption{Average number of correctly and incorrectly selected and excluded exogenous covariates in the 500 simulation runs for Scenario 1 with independent covariates.} \label{tab.Indep_Corr_Sel}
\end{table}

\begin{table}[ht]
\centering
\begin{tabular}{crrrrrr}
\hline
\textbf{Shock} & \textbf{n} & $X_1$ & $X_2$ & $X_3$ & $X_4$ & $X_5$ \\
\hline
\multirow{4}{*}{Normal}
  & 500   & 0.53 & 0.02 & 0.62 & 0.81 & 0.02 \\
  & 1000  & 0.88 & 0.03 & 0.95 & 0.99 & 0.03 \\
  & 5000  & 1.00 & 0.02 & 1.00 & 1.00 & 0.03 \\
  & 10000 & 1.00 & 0.03 & 1.00 & 1.00 & 0.03 \\
\hline
\multirow{4}{*}{t5}
  & 500   & 0.11 & 0.01 & 0.17 & 0.26 & 0.01 \\
  & 1000  & 0.35 & 0.01 & 0.48 & 0.64 & 0.01 \\
  & 5000  & 0.97 & 0.02 & 0.98 & 1.00 & 0.01 \\
  & 10000 & 1.00 & 0.01 & 1.00 & 1.00 & 0.01 \\
\hline
\multirow{4}{*}{t7}
  & 500   & 0.20 & 0.01 & 0.26 & 0.41 & 0.02 \\
  & 1000  & 0.49 & 0.01 & 0.71 & 0.85 & 0.01 \\
  & 5000  & 1.00 & 0.03 & 1.00 & 1.00 & 0.01 \\
  & 10000 & 1.00 & 0.03 & 1.00 & 1.00 & 0.02 \\
\hline
\end{tabular}
\caption{Proportion of the 500 simulation runs that each exogenous covariate is selected by the proposed method for Scenario 1 where covariates are independent and $d=5$.}
\label{tab.Prop_time_d5}
\end{table}

\begin{table}[ht]
\centering
\begin{tabular}{crrrrrrrrr}
\hline
\textbf{Shock} & \textbf{n} & $X_1$ & $X_2$ & $X_3$ & $X_4$ & $X_5$ &$X_6$  & $X_7$ & $X_8$  \\
\hline
\multirow{4}{*}{Normal}
  & 500   & 0.04 & 0.00 & 0.18 & 0.14 & 0.00 & 0.00 & 0.01 & 0.84 \\
  & 1000  & 0.21 & 0.01 & 0.50 & 0.56 & 0.00 & 0.01 & 0.02 & 1.00 \\
  & 5000  & 0.99 & 0.02 & 1.00 & 1.00 & 0.02 & 0.01 & 0.02 & 1.00 \\
  & 10000 & 1.00 & 0.02 & 1.00 & 1.00 & 0.02 & 0.01 & 0.02 & 1.00 \\
\hline
\multirow{4}{*}{t5}
  & 500   & 0.02 & 0.00 & 0.02 & 0.02 & 0.00 & 0.00 & 0.01 & 0.21 \\
  & 1000  & 0.05 & 0.00 & 0.08 & 0.10 & 0.00 & 0.00 & 0.00 & 0.60 \\
  & 5000  & 0.46 & 0.01 & 0.84 & 0.79 & 0.00 & 0.00 & 0.01 & 1.00 \\
  & 10000 & 0.79 & 0.01 & 0.98 & 0.97 & 0.01 & 0.01 & 0.00 & 1.00 \\
\hline
\multirow{4}{*}{t7}
  & 500   & 0.01 & 0.00 & 0.04 & 0.05 & 0.00 & 0.00 & 0.00 & 0.41 \\
  & 1000  & 0.06 & 0.01 & 0.18 & 0.19 & 0.00 & 0.01 & 0.01 & 0.88 \\
  & 5000  & 0.74 & 0.01 & 0.95 & 0.99 & 0.02 & 0.02 & 0.01 & 1.00 \\
  & 10000 & 0.98 & 0.02 & 1.00 & 1.00 & 0.01 & 0.01 & 0.01 & 1.00 \\
\hline
\end{tabular}
\caption{Proportion of the 500 simulation runs that each exogenous covariate is selected by the proposed method for Scenario 1 where covariates are independent and $d=8$.}
\label{tab.Prop_time_d8}
\end{table}

\begin{table}[ht]
\centering
\resizebox{\textwidth}{!}{%
\begin{tabular}{ccrrrrrrr}
\hline
\multirow{2}{*}{\textbf{d}} & \multirow{2}{*}{\textbf{Shock}} & \multirow{2}{*}{\textbf{n}} &
\multicolumn{3}{c}{\textbf{BIC}} &
\multicolumn{3}{c}{\textbf{AIC}} \\
& & & \textbf{full} & \textbf{null} & \textbf{selected} & \textbf{full} & \textbf{null} & \textbf{selected} \\
\hline
\multirow{12}{*}{$d = 5$}
  & \multirow{4}{*}{Normal}
    & 500   & 2360.0 & 25690.0 & 4294.9 & 2326.3 & 25677.3 & 4273.8 \\
  & & 1000  & 4700.8 & 51584.9 & 4832.8 & 4661.5 & 51570.2 & 4804.0 \\
  & & 5000  & 23338.2 & 256862.0 & 23322.9 & 23286.1 & 256842.5 & 23283.5 \\
  & & 10000 & 46628.9 & 513827.7 & 46612.4 & 46571.2 & 513806.1 & 46568.7 \\
  \cmidrule(lr){2-9}
  & \multirow{4}{*}{t5}
    & 500   & 3064.9 & 41314.8 & 25538.6 & 3031.2 & 41302.2 & 25523.6 \\
  & & 1000  & 6124.8 & 82976.4 & 24989.5 & 6085.5 & 82961.7 & 24967.5 \\
  & & 5000  & 30652.0 & 418141.9 & 30644.6 & 30599.9 & 418122.3 & 30605.7 \\
  & & 10000 & 61349.6 & 837224.6 & 61340.0 & 61291.9 & 837203.0 & 61296.7 \\
  \cmidrule(lr){2-9}
  & \multirow{4}{*}{t7}
    & 500   & 2849.6 & 35882.7 & 17210.9 & 2815.9 & 35870.1 & 17194.5 \\
  & & 1000  & 5642.3 & 71082.4 & 8768.7 & 5603.1 & 71067.7 & 8743.9 \\
  & & 5000  & 28078.4 & 353161.1 & 28064.7 & 28026.2 & 353141.6 & 28025.3 \\
  & & 10000 & 56164.8 & 707240.6 & 56149.5 & 56107.1 & 707218.9 & 56105.9 \\
\hline
\multirow{12}{*}{$d = 8$}
  & \multirow{4}{*}{Normal}
    & 500   & 2879.8 & 34682.2 & 7202.1 & 2833.5 & 34669.6 & 7184.4 \\
  & & 1000  & 5717.0 & 69151.1 & 5694.2 & 5663.0 & 69136.4 & 5668.1 \\
  & & 5000  & 28375.6 & 347174.1 & 28344.7 & 28303.9 & 347154.6 & 28298.7 \\
  & & 10000 & 56691.3 & 692769.3 & 56657.7 & 56612.0 & 692747.7 & 56606.7 \\
  \cmidrule(lr){2-9}
  & \multirow{4}{*}{t5}
    & 500   & 3596.2 & 55258.5 & 42988.1 & 3549.8 & 55245.9 & 42974.3 \\
  & & 1000  & 7160.6 & 109869.1 & 45238.7 & 7106.6 & 109854.4 & 45219.9 \\
  & & 5000  & 35702.1 & 547147.6 & 35697.9 & 35630.4 & 547128.0 & 35658.1 \\
  & & 10000 & 71268.2 & 1088040.5 & 71248.9 & 71188.9 & 1088018.9 & 71200.2 \\
  \cmidrule(lr){2-9}
  & \multirow{4}{*}{t7}
    & 500   & 3352.3 & 46767.7 & 27187.9 & 3305.9 & 46755.0 & 27173.0 \\
  & & 1000  & 6673.2 & 93441.8 & 15427.5 & 6619.2 & 93427.1 & 15406.2 \\
  & & 5000  & 33079.9 & 464530.2 & 33053.6 & 33008.2 & 464510.6 & 33009.7 \\
  & & 10000 & 66213.0 & 930729.7 & 66182.4 & 66133.7 & 930708.1 & 66131.8 \\
\hline
\end{tabular}
}
\caption{Average BIC and AIC in the 500 simulation runs for Scenario 1 with independent exogenous covariates.}
\label{tab.AIC}
\end{table}

\begin{table}[ht]
\centering
\begin{tabular}{cccrrrr}
\hline
\multirow{2}{*}{\textbf{d}} & \multirow{2}{*}{\textbf{Shock}} & \multirow{2}{*}{\textbf{Metric}} &
\multicolumn{4}{c}{\textbf{Sample size $n$}} \\
& & & \textbf{500} & \textbf{1000} & \textbf{5000} & \textbf{10000} \\
\hline
\multirow{12}{*}{$d = 5$}
  & \multirow{4}{*}{Normal}
    & corr. selected   & 0.63 & 1.37 & 2.97 & 3.00 \\
  & & incorr. selected & 0.05 & 0.04 & 0.07 & 0.06 \\
  & & corr. excluded   & 1.95 & 1.96 & 1.93 & 1.94 \\
  & & incorr. excluded & 2.37 & 1.63 & 0.03 & 0.00 \\
\cmidrule(lr){2-7}
  & \multirow{4}{*}{t5}
    & corr. selected   & 0.19 & 0.38 & 1.82 & 2.64 \\
  & & incorr. selected & 0.03 & 0.04 & 0.04 & 0.07 \\
  & & corr. excluded   & 1.97 & 1.96 & 1.96 & 1.93 \\
  & & incorr. excluded & 2.81 & 2.62 & 1.18 & 0.36 \\
\cmidrule(lr){2-7}
  & \multirow{4}{*}{t7}
    & corr. selected   & 0.30 & 0.56 & 2.54 & 2.96 \\
  & & incorr. selected & 0.02 & 0.03 & 0.06 & 0.07 \\
  & & corr. excluded   & 1.98 & 1.97 & 1.94 & 1.93 \\
  & & incorr. excluded & 2.70 & 2.44 & 0.46 & 0.04 \\
\hline
\multirow{12}{*}{$d = 8$}
  & \multirow{4}{*}{Normal}
    & corr. selected   & 0.38 & 0.65 & 3.61 & 3.98 \\
  & & incorr. selected & 0.04 & 0.03 & 0.05 & 0.07 \\
  & & corr. excluded   & 3.96 & 3.97 & 3.95 & 3.93 \\
  & & incorr. excluded & 3.62 & 3.35 & 0.39 & 0.02 \\
\cmidrule(lr){2-7}
  & \multirow{4}{*}{t5}
    & corr. selected   & 0.09 & 0.13 & 1.22 & 2.62 \\
  & & incorr. selected & 0.01 & 0.01 & 0.04 & 0.04 \\
  & & corr. excluded   & 3.99 & 3.99 & 3.96 & 3.96 \\
  & & incorr. excluded & 3.91 & 3.87 & 2.78 & 1.38 \\
\cmidrule(lr){2-7}
  & \multirow{4}{*}{t7}
    & corr. selected   & 0.13 & 0.25 & 2.25 & 3.46 \\
  & & incorr. selected & 0.02 & 0.03 & 0.03 & 0.06 \\
  & & corr. excluded   & 3.98 & 3.97 & 3.97 & 3.94 \\
  & & incorr. excluded & 3.87 & 3.75 & 1.75 & 0.54 \\
\hline
\end{tabular}
\caption{Average number of correctly and incorrectly selected and excluded exogenous covariates in the 500 simulation runs for Scenario 2 where covariates have exponentially decaying covariance.}
\label{tab.corr_ExpDecay}
\end{table}

\begin{table}[ht]
\centering
\begin{tabular}{crrrrrr}
\hline
\textbf{Shock} & \textbf{n} & $X_1$ & $X_2$ & $X_3$ & $X_4$ & $X_5$  \\
\hline
\multirow{4}{*}{Normal}
  & 500   & 0.30 & 0.03 & 0.16 & 0.17 & 0.03 \\
  & 1000  & 0.63 & 0.02 & 0.39 & 0.35 & 0.02 \\
  & 5000  & 1.00 & 0.04 & 0.99 & 0.98 & 0.02 \\
  & 10000 & 1.00 & 0.02 & 1.00 & 1.00 & 0.04 \\
\hline
\multirow{4}{*}{t5}
  & 500   & 0.07 & 0.02 & 0.07 & 0.05 & 0.01 \\
  & 1000  & 0.17 & 0.02 & 0.13 & 0.08 & 0.02 \\
  & 5000  & 0.80 & 0.03 & 0.54 & 0.48 & 0.01 \\
  & 10000 & 0.97 & 0.02 & 0.83 & 0.84 & 0.05 \\
\hline
\multirow{4}{*}{t7}
  & 500   & 0.15 & 0.01 & 0.06 & 0.09 & 0.01 \\
  & 1000  & 0.24 & 0.01 & 0.15 & 0.17 & 0.02 \\
  & 5000  & 0.97 & 0.04 & 0.77 & 0.81 & 0.02 \\
  & 10000 & 1.00 & 0.03 & 0.97 & 0.98 & 0.04 \\
\hline
\end{tabular}
\caption{Proportion of the 500 simulation runs that each exogenous covariate is selected by the proposed method for Scenario 2 where covariates have exponentially decaying covariance and $d=5$.}
\label{tab.covariates_d5_ExpDecay}
\end{table}

\begin{table}[ht]
\centering
\begin{tabular}{crrrrrrrrr}
\hline
\textbf{Shock} & \textbf{n} & $X_1$ & $X_2$ & $X_3$ & $X_4$ & $X_5$ &$X_6$  & $X_7$ & $X_8$ \\
\hline
\multirow{4}{*}{Normal}
  & 500   & 0.18 & 0.01 & 0.08 & 0.08 & 0.01 & 0.01 & 0.00 & 0.05 \\
  & 1000  & 0.29 & 0.01 & 0.16 & 0.09 & 0.00 & 0.01 & 0.01 & 0.11 \\
  & 5000  & 1.00 & 0.02 & 0.96 & 0.77 & 0.01 & 0.01 & 0.01 & 0.88 \\
  & 10000 & 1.00 & 0.01 & 1.00 & 0.99 & 0.02 & 0.02 & 0.02 & 0.99 \\
\hline
\multirow{4}{*}{t5}
  & 500   & 0.03 & 0.01 & 0.02 & 0.01 & 0.00 & 0.00 & 0.00 & 0.02 \\
  & 1000  & 0.05 & 0.00 & 0.04 & 0.02 & 0.01 & 0.00 & 0.00 & 0.02 \\
  & 5000  & 0.46 & 0.02 & 0.32 & 0.20 & 0.01 & 0.00 & 0.01 & 0.23 \\
  & 10000 & 0.83 & 0.01 & 0.74 & 0.48 & 0.01 & 0.01 & 0.02 & 0.57 \\
\hline
\multirow{4}{*}{t7}
  & 500   & 0.06 & 0.01 & 0.03 & 0.02 & 0.00 & 0.00 & 0.00 & 0.02 \\
  & 1000  & 0.10 & 0.01 & 0.07 & 0.04 & 0.01 & 0.01 & 0.01 & 0.03 \\
  & 5000  & 0.77 & 0.02 & 0.61 & 0.40 & 0.01 & 0.00 & 0.01 & 0.48 \\
  & 10000 & 0.98 & 0.03 & 0.95 & 0.76 & 0.01 & 0.01 & 0.01 & 0.77 \\
\hline
\end{tabular}
\caption{Proportion of the 500 simulation runs that each exogenous covariate is selected by the proposed method for Scenario 2 where covariates have exponentially decaying covariance and $d=8$.}
\label{tab.covariates_d8_ExpDecay}
\end{table}

\begin{table}[ht]
\centering
\resizebox{\textwidth}{!}{%
\begin{tabular}{ccrrrrrrr}
\hline
\multirow{2}{*}{\textbf{d}} & \multirow{2}{*}{\textbf{Shock}} & \multirow{2}{*}{\textbf{n}} &
\multicolumn{3}{c}{\textbf{BIC}} &
\multicolumn{3}{c}{\textbf{AIC}} \\
& & & \textbf{full} & \textbf{null} & \textbf{selected} & \textbf{full} & \textbf{null} & \textbf{selected} \\
\hline
\multirow{12}{*}{$d=5$}
  & \multirow{4}{*}{Normal}
    & 500   & 1356.7 & 12155.6 & 6445.7 & 1323.0 & 12143.0 & 6430.2 \\
  & & 1000  & 2696.5 & 24402.8 & 6653.6 & 2657.2 & 24388.1 & 6632.0 \\
  & & 5000  & 13312.7 & 121458.0 & 13297.4 & 13260.6 & 121438.5 & 13258.1 \\
  & & 10000 & 26580.6 & 242932.1 & 26564.0 & 26522.9 & 242910.4 & 26520.3 \\
  \cmidrule(lr){2-9}
  & \multirow{4}{*}{t5}
    & 500   & 2059.1 & 21616.5 & 17800.0 & 2025.4 & 21603.9 & 17766.5 \\
  & & 1000  & 4112.9 & 43432.8 & 30209.3 & 4073.6 & 43418.1 & 30192.5 \\
  & & 5000  & 20573.1 & 218766.5 & 39493.2 & 20521.0 & 218747.0 & 39461.5 \\
  & & 10000 & 41210.6 & 439305.9 & 45579.5 & 41152.9 & 439284.3 & 45538.3 \\
  \cmidrule(lr){2-9}
  & \multirow{4}{*}{t7}
    & 500   & 1841.2 & 18249.6 & 13720.0 & 1807.4 & 18236.9 & 13706.0 \\
  & & 1000  & 3634.0 & 36050.6 & 21517.1 & 3594.7 & 36035.9 & 21499.5 \\
  & & 5000  & 18020.8 & 179290.2 & 20112.0 & 17968.6 & 179270.7 & 20075.5 \\
  & & 10000 & 36046.0 & 358855.6 & 36030.6 & 35988.3 & 358834.0 & 35987.2 \\
\hline
\multirow{12}{*}{$d=8$}
  & \multirow{4}{*}{Normal}
    & 500   & 1435.3 & 12731.8 & 8807.6 & 1388.9 & 12719.1 & 8793.2 \\
  & & 1000  & 2830.6 & 25445.6 & 14686.4 & 2776.6 & 25430.9 & 14668.3 \\
  & & 5000  & 13937.8 & 127172.8 & 13905.9 & 13866.2 & 127153.2 & 13862.5 \\
  & & 10000 & 27819.0 & 254329.8 & 27785.7 & 27739.7 & 254308.1 & 27734.8 \\
  \cmidrule(lr){2-9}
  & \multirow{4}{*}{t5}
    & 500   & 2146.9 & 22949.2 & 21332.5 & 2100.5 & 22936.6 & 21319.4 \\
  & & 1000  & 4267.8 & 45813.7 & 40739.6 & 4213.8 & 45799.0 & 40724.2 \\
  & & 5000  & 21200.0 & 228062.5 & 76013.4 & 21128.3 & 228042.9 & 75985.6 \\
  & & 10000 & 42276.8 & 453163.9 & 56014.1 & 42197.5 & 453142.3 & 55973.2 \\
  \cmidrule(lr){2-9}
  & \multirow{4}{*}{t7}
    & 500   & 1907.9 & 18828.2 & 16556.4 & 1861.5 & 18815.6 & 16543.1 \\
  & & 1000  & 3779.3 & 37472.0 & 29650.0 & 3725.3 & 37457.3 & 29633.9 \\
  & & 5000  & 18612.5 & 186089.4 & 27452.4 & 18540.8 & 186069.9 & 27418.0 \\
  & & 10000 & 37267.7 & 373475.3 & 37239.5 & 37188.4 & 373453.7 & 37192.5 \\
\hline
\end{tabular}
}
\caption{Average BIC and AIC of the 500 simulation runs for Scenario 2 with  exogenous covariates with exponentially decaying covariance.}
\label{tab.AIC_ExpDecay}
\end{table}

\begin{table}[ht]
\centering
\begin{tabular}{cccrrrr}
\hline
\multirow{2}{*}{\textbf{$d$}} & \multirow{2}{*}{\textbf{Shock}} & \multirow{2}{*}{\textbf{Metric}} &
\multicolumn{4}{c}{\textbf{Sample size $n$}} \\
& & & \textbf{500} & \textbf{1000} & \textbf{5000} & \textbf{10000} \\
\hline
\multirow{12}{*}{$d = 5$}
  & \multirow{4}{*}{Normal}
    & corr. select   & 1.05 & 2.36 & 3.00 & 3.00 \\
  & & incorr. select & 0.03 & 0.03 & 0.04 & 0.07 \\
  & & corr. exclude  & 1.97 & 1.97 & 1.96 & 1.93 \\
  & & incorr. exclude & 1.95 & 0.64 & 0.00 & 0.00 \\
  \cmidrule(lr){2-7}
  & \multirow{4}{*}{t5}
    & corr. select   & 0.14 & 0.50 & 2.57 & 2.95 \\
  & & incorr. select & 0.01 & 0.00 & 0.03 & 0.02 \\
  & & corr. exclude  & 1.99 & 2.00 & 1.97 & 1.98 \\
  & & incorr. exclude & 2.86 & 2.50 & 0.43 & 0.05 \\
  \cmidrule(lr){2-7}
  & \multirow{4}{*}{t7}
    & corr. select   & 0.23 & 0.93 & 2.96 & 3.00 \\
  & & incorr. select & 0.01 & 0.01 & 0.04 & 0.05 \\
  & & corr. exclude  & 1.99 & 1.99 & 1.96 & 1.95 \\
  & & incorr. exclude & 2.77 & 2.07 & 0.04 & 0.00 \\
\hline 
\multirow{12}{*}{$d = 8$}
  & \multirow{4}{*}{Normal}
    & corr. select   & 0.65 & 1.65 & 3.84 & 4.00 \\
  & & incorr. select & 0.01 & 0.02 & 0.08 & 0.08 \\
  & & corr. exclude  & 3.99 & 3.98 & 3.92 & 3.92 \\
  & & incorr. exclude & 3.35 & 2.35 & 0.16 & 0.00 \\
  \cmidrule(lr){2-7}
  & \multirow{4}{*}{t5}
    & corr. select   & 0.06 & 0.26 & 1.80 & 2.93 \\
  & & incorr. select & 0.00 & 0.00 & 0.01 & 0.01 \\
  & & corr. exclude  & 4.00 & 4.00 & 3.99 & 3.99 \\
  & & incorr. exclude & 3.94 & 3.74 & 2.20 & 1.07 \\
  \cmidrule(lr){2-7}
  & \multirow{4}{*}{t7}
    & corr. select   & 0.16 & 0.60 & 2.73 & 3.65 \\
  & & incorr. select & 0.00 & 0.00 & 0.04 & 0.04 \\
  & & corr. exclude  & 4.00 & 4.00 & 3.96 & 3.96 \\
  & & incorr. exclude & 3.84 & 3.40 & 1.27 & 0.35 \\
\hline
\end{tabular}
\caption{Average number of correctly and incorrectly selected and excluded exogenous covariates in the 500 simulation runs for Scenario 3 with independent covariates in a GARCH(2,1)-X model.} \label{tab:GARCH_Indep_correct}
\end{table}

\begin{table}[ht]
\centering
\begin{tabular}{crrrrrr}
\hline
\textbf{Shock} & \textbf{n} & $X_1$ & $X_2$ & $X_3$ & $X_4$ & $X_5$  \\
\hline
\multirow{4}{*}{Normal}
  & 500   & 0.22 & 0.02 & 0.34 & 0.49 & 0.02 \\
  & 1000  & 0.65 & 0.02 & 0.77 & 0.95 & 0.01 \\
  & 5000  & 1.00 & 0.02 & 1.00 & 1.00 & 0.02 \\
  & 10000 & 1.00 & 0.04 & 1.00 & 1.00 & 0.03 \\
\hline
\multirow{4}{*}{t5}
  & 500   & 0.03 & 0.00 & 0.03 & 0.08 & 0.01 \\
  & 1000  & 0.10 & 0.00 & 0.15 & 0.24 & 0.00 \\
  & 5000  & 0.77 & 0.01 & 0.85 & 0.95 & 0.01 \\
  & 10000 & 0.97 & 0.01 & 0.99 & 0.99 & 0.01 \\
\hline
\multirow{4}{*}{t7}
  & 500   & 0.05 & 0.00 & 0.07 & 0.11 & 0.01 \\
  & 1000  & 0.19 & 0.00 & 0.29 & 0.45 & 0.00 \\
  & 5000  & 0.96 & 0.03 & 1.00 & 1.00 & 0.01 \\
  & 10000 & 1.00 & 0.03 & 1.00 & 1.00 & 0.02 \\
\hline
\end{tabular}
\caption{Proportion of the 500 simulation runs that each exogenous covariate is selected by the proposed method for Scenario 3 from a GARCH(2,1)-X with independent covariates and $d=5$.}
\label{tab.GARCH_Indep_d5}
\end{table}

\begin{table}[ht]
\centering
\begin{tabular}{crrrrrrrrr}
\hline
\textbf{Shock} & \textbf{n} & $X_1$ & $X_2$ & $X_3$ & $X_4$ & $X_5$ &$X_6$  & $X_7$ & $X_8$ \\
\hline
\multirow{4}{*}{Normal}
  & 500   & 0.02 & 0.00 & 0.07 & 0.05 & 0.01 & 0.00 & 0.00 & 0.51 \\
  & 1000  & 0.12 & 0.00 & 0.28 & 0.28 & 0.00 & 0.00 & 0.01 & 0.97 \\
  & 5000  & 0.87 & 0.02 & 0.99 & 0.98 & 0.02 & 0.01 & 0.02 & 1.00 \\
  & 10000 & 1.00 & 0.02 & 1.00 & 1.00 & 0.02 & 0.01 & 0.03 & 1.00 \\
\hline
\multirow{4}{*}{t5}
  & 500   & 0.00 & 0.00 & 0.01 & 0.01 & 0.00 & 0.00 & 0.00 & 0.05 \\
  & 1000  & 0.01 & 0.00 & 0.03 & 0.03 & 0.00 & 0.00 & 0.00 & 0.19 \\
  & 5000  & 0.12 & 0.00 & 0.38 & 0.33 & 0.00 & 0.00 & 0.01 & 0.97 \\
  & 10000 & 0.41 & 0.00 & 0.75 & 0.77 & 0.00 & 0.00 & 0.00 & 1.00 \\
\hline
\multirow{4}{*}{t7}
  & 500   & 0.00 & 0.00 & 0.01 & 0.00 & 0.00 & 0.00 & 0.00 & 0.14 \\
  & 1000  & 0.02 & 0.00 & 0.03 & 0.06 & 0.00 & 0.00 & 0.00 & 0.49 \\
  & 5000  & 0.34 & 0.01 & 0.68 & 0.70 & 0.01 & 0.01 & 0.01 & 1.00 \\
  & 10000 & 0.71 & 0.02 & 0.98 & 0.97 & 0.01 & 0.01 & 0.01 & 1.00 \\
\hline
\end{tabular}
\caption{Proportion of the 500 simulation runs that each exogenous covariate is selected by the proposed method for Scenario 3 from a GARCH(2,1)-X with independent covariates and $d=8$.}
\label{tab.GARCH_Indep_d8}
\end{table}

\begin{table}[http!]
\centering
\resizebox{\textwidth}{!}{%
\begin{tabular}{ccrrrrrrr}
\hline
\multirow{2}{*}{\textbf{d}} & \multirow{2}{*}{\textbf{Shock}} & \multirow{2}{*}{\textbf{n}} &
\multicolumn{3}{c}{\textbf{BIC}} &
\multicolumn{3}{c}{\textbf{AIC}} \\
& & & \textbf{full} & \textbf{null} & \textbf{selected} & \textbf{full} & \textbf{null} & \textbf{selected} \\
\hline
\multirow{12}{*}{$d = 5$}
  & \multirow{4}{*}{Normal}
    & 500   & 2747.7 & 14940.4 & 6921.7 & 2709.8 & 14923.6 & 6900.3 \\
  & & 1000  & 5481.2 & 29944.1 & 5704.7 & 5437.0 & 29924.5 & 5673.3 \\
  & & 5000  & 27210.4 & 149837.0 & 27195.1 & 27151.8 & 149810.9 & 27149.2 \\
  & & 10000 & 54370.8 & 299221.1 & 54354.2 & 54305.9 & 299192.3 & 54303.2 \\
  \cmidrule(lr){2-9}
  & \multirow{4}{*}{t5}
    & 500   & 3799.7 & 22674.7 & 19894.6 & 3761.8 & 22657.9 & 19877.1 \\
  & & 1000  & 7612.4 & 45175.8 & 31586.1 & 7568.2 & 45156.1 & 31564.0 \\
  & & 5000  & 38054.2 & 227819.8 & 41572.8 & 37995.5 & 227793.7 & 41529.8 \\
  & & 10000 & 76215.7 & 455202.2 & 76206.6 & 76150.8 & 455173.3 & 76156.4 \\
  \cmidrule(lr){2-9}
  & \multirow{4}{*}{t7}
    & 500   & 3452.4 & 19969.9 & 16710.5 & 3414.4 & 19953.0 & 16692.6 \\
  & & 1000  & 6833.8 & 39475.7 & 18591.4 & 6789.6 & 39456.1 & 18567.1 \\
  & & 5000  & 34010.6 & 196706.4 & 34546.4 & 33951.9 & 196680.4 & 34500.8 \\
  & & 10000 & 68059.9 & 393894.7 & 68044.7 & 67995.0 & 393865.9 & 67993.8 \\
\hline 
\multirow{12}{*}{$d = 8$}
  & \multirow{4}{*}{Normal}
    & 500   & 3269.8 & 17896.9 & 9683.6 & 3219.2 & 17879.9 & 9664.0 \\
  & & 1000  & 6494.6 & 35613.2 & 7247.2 & 6435.7 & 35593.5 & 7219.4 \\
  & & 5000  & 32248.8 & 179298.3 & 32217.6 & 32170.6 & 179272.3 & 32166.0 \\
  & & 10000 & 64431.8 & 357412.9 & 64398.1 & 64345.2 & 357384.1 & 64339.9 \\
  \cmidrule(lr){2-9}
  & \multirow{4}{*}{t5}
    & 500   & 4337.9 & 26421.7 & 25137.2 & 4287.4 & 26404.8 & 25120.1 \\
  & & 1000  & 8648.8 & 52307.1 & 42208.9 & 8589.9 & 52287.5 & 42188.0 \\
  & & 5000  & 43097.9 & 261417.3 & 50118.4 & 43019.8 & 261391.2 & 50080.5 \\
  & & 10000 & 86046.8 & 520224.7 & 86048.6 & 85960.3 & 520195.9 & 85998.5 \\
  \cmidrule(lr){2-9}
  & \multirow{4}{*}{t7}
    & 500   & 3955.0 & 22770.8 & 19896.2 & 3904.4 & 22753.9 & 19878.7 \\
  & & 1000  & 7877.3 & 46011.9 & 26219.9 & 7818.4 & 45992.2 & 26197.3 \\
  & & 5000  & 39001.3 & 228140.3 & 38980.1 & 38923.1 & 228114.3 & 38935.9 \\
  & & 10000 & 78096.1 & 456673.2 & 78068.7 & 78009.6 & 456644.3 & 78013.2 \\
\hline
\end{tabular}
}
\caption{Average BIC and AIC of the 500 simulation runs for Scenario 3 with independent exogenous covariates in a GARCH(2,1)-X model.} \label{tab:GARCH_Indep_AIC}
\end{table}

\begin{table}[ht]
\centering
\begin{tabular}{cccrrrr}
\hline
\multirow{2}{*}{\textbf{$d$}} & \multirow{2}{*}{\textbf{Shock}} & \multirow{2}{*}{\textbf{Metric}} &
\multicolumn{4}{c}{\textbf{Sample size $n$}} \\
& & & \textbf{500} & \textbf{1000} & \textbf{5000} & \textbf{10000} \\
\hline
\multirow{12}{*}{$d = 5$}
  & \multirow{4}{*}{Normal}
    & corr. select   & 0.35 & 0.75 & 2.73 & 3.00 \\
  & & incorr. select & 0.04 & 0.03 & 0.05 & 0.06 \\
  & & corr. exclude  & 1.96 & 1.97 & 1.95 & 1.94 \\
  & & incorr. exclude & 2.65 & 2.25 & 0.27 & 0.00 \\
  \cmidrule(lr){2-7}
  & \multirow{4}{*}{t5}
    & corr. select   & 0.08 & 0.16 & 0.87 & 1.66 \\
  & & incorr. select & 0.01 & 0.03 & 0.02 & 0.03 \\
  & & corr. exclude  & 1.99 & 1.97 & 1.98 & 1.97 \\
  & & incorr. exclude & 2.92 & 2.84 & 2.13 & 1.34 \\
  \cmidrule(lr){2-7}
  & \multirow{4}{*}{t7}
    & corr. select   & 0.14 & 0.21 & 1.64 & 2.50 \\
  & & incorr. select & 0.01 & 0.02 & 0.04 & 0.05 \\
  & & corr. exclude  & 1.99 & 1.98 & 1.96 & 1.95 \\
  & & incorr. exclude & 2.86 & 2.79 & 1.36 & 0.50 \\
\hline 
\multirow{12}{*}{$d = 8$}
  & \multirow{4}{*}{Normal}
    & corr. select   & 0.18 & 0.31 & 2.69 & 3.74 \\
  & & incorr. select & 0.02 & 0.02 & 0.03 & 0.09 \\
  & & corr. exclude  & 3.98 & 3.98 & 3.97 & 3.91 \\
  & & incorr. exclude & 3.82 & 3.69 & 1.31 & 0.26 \\
  \cmidrule(lr){2-7}
  & \multirow{4}{*}{t5}
    & corr. select   & 0.03 & 0.05 & 0.35 & 1.00 \\
  & & incorr. select & 0.01 & 0.01 & 0.02 & 0.01 \\
  & & corr. exclude  & 3.99 & 3.99 & 3.98 & 3.99 \\
  & & incorr. exclude & 3.97 & 3.95 & 3.65 & 3.00 \\
  \cmidrule(lr){2-7}
  & \multirow{4}{*}{t7}
    & corr. select   & 0.04 & 0.09 & 1.02 & 2.11 \\
  & & incorr. select & 0.02 & 0.02 & 0.01 & 0.03 \\
  & & corr. exclude  & 3.98 & 3.98 & 3.99 & 3.97 \\
  & & incorr. exclude & 3.96 & 3.91 & 2.98 & 1.89 \\
\hline
\end{tabular}
\caption{Average number of correctly and incorrectly selected and excluded exogenous covariates in the 500 simulation runs for Scenario 4 with covariates having exponential decaying covariance in a GARCH(2,1)-X model.} \label{tab:GARCH_ExpDecay_correct}
\end{table}

\begin{table}[ht]
\centering
\begin{tabular}{crrrrrr}
\hline
\textbf{Shock} & \textbf{n} & $X_1$ & $X_2$ & $X_3$ & $X_4$ & $X_5$ \\
\hline
\multirow{4}{*}{Normal}
 & 500 & 0.16 & 0.02 & 0.08 & 0.11 & 0.02 \\
 & 1000 & 0.38 & 0.01 & 0.21 & 0.16 & 0.02 \\
 & 5000 & 0.99 & 0.03 & 0.87 & 0.88 & 0.02 \\
 & 10000 & 1.00 & 0.02 & 1.00 & 1.00 & 0.04 \\
\hline
\multirow{4}{*}{t5}
 & 500 & 0.04 & 0.01 & 0.02 & 0.02 & 0.00 \\
 & 1000 & 0.06 & 0.02 & 0.06 & 0.03 & 0.01 \\
 & 5000 & 0.43 & 0.01 & 0.24 & 0.20 & 0.01 \\
 & 10000 & 0.74 & 0.01 & 0.45 & 0.48 & 0.02 \\
\hline
\multirow{4}{*}{t7}
 & 500 & 0.05 & 0.01 & 0.04 & 0.05 & 0.01 \\
 & 1000 & 0.09 & 0.00 & 0.06 & 0.05 & 0.01 \\
 & 5000 & 0.72 & 0.03 & 0.45 & 0.47 & 0.02 \\
 & 10000 & 0.97 & 0.02 & 0.76 & 0.77 & 0.03 \\
\hline
\end{tabular}
\caption{Proportion of the 500 simulation runs that each exogenous covariate is selected by the proposed method for Scenario 4 from a GARCH(2,1)-X for covariates with exponentially decaying covariance and $d=5$.}
\label{tab:GARCH_ExpDecay_d5}
\end{table}

\begin{table}[ht]
\centering
\begin{tabular}{crrrrrrrrr}
\hline
\textbf{Shock} & \textbf{n} & $X_1$ & $X_2$ & $X_3$ & $X_4$ & $X_5$ &$X_6$  & $X_7$ & $X_8$ \\
\hline
\multirow{4}{*}{Normal}
 & 500 & 0.07 & 0.01 & 0.04 & 0.04 & 0.01 & 0.00 & 0.00 & 0.03 \\
 & 1000 & 0.15 & 0.01 & 0.06 & 0.05 & 0.00 & 0.00 & 0.01 & 0.05 \\
 & 5000 & 0.87 & 0.01 & 0.77 & 0.50 & 0.01 & 0.01 & 0.01 & 0.55 \\
 & 10000 & 1.00 & 0.02 & 0.97 & 0.88 & 0.02 & 0.02 & 0.03 & 0.89 \\
\hline
\multirow{4}{*}{t5}
 & 500 & 0.01 & 0.01 & 0.01 & 0.00 & 0.00 & 0.00 & 0.00 & 0.00 \\
 & 1000 & 0.02 & 0.01 & 0.01 & 0.01 & 0.00 & 0.00 & 0.00 & 0.01 \\
 & 5000 & 0.15 & 0.01 & 0.10 & 0.05 & 0.00 & 0.00 & 0.00 & 0.05 \\
 & 10000 & 0.39 & 0.00 & 0.28 & 0.16 & 0.00 & 0.00 & 0.00 & 0.17 \\
\hline
\multirow{4}{*}{t7}
 & 500 & 0.01 & 0.01 & 0.01 & 0.01 & 0.00 & 0.00 & 0.01 & 0.01 \\
 & 1000 & 0.03 & 0.01 & 0.03 & 0.01 & 0.00 & 0.00 & 0.00 & 0.01 \\
 & 5000 & 0.38 & 0.01 & 0.28 & 0.19 & 0.00 & 0.00 & 0.00 & 0.16 \\
 & 10000 & 0.72 & 0.00 & 0.62 & 0.37 & 0.01 & 0.00 & 0.01 & 0.39 \\
\hline
\end{tabular}
\caption{Proportion of the 500 simulation runs that each exogenous covariate is selected by the proposed method for Scenario 4 from a GARCH(2,1)-X for covariates with exponentially decaying covariance and $d=8$.}
\label{tab:GARCH_ExpDecay_d8}
\end{table}

\begin{table}[ht]
\centering
\resizebox{\textwidth}{!}{%
\begin{tabular}{ccrrrrrrr}
\hline
\multirow{2}{*}{\textbf{d}} & \multirow{2}{*}{\textbf{Shock}} & \multirow{2}{*}{\textbf{n}} &
\multicolumn{3}{c}{\textbf{BIC}} &
\multicolumn{3}{c}{\textbf{AIC}} \\
& & & \textbf{full} & \textbf{null} & \textbf{selected} & \textbf{full} & \textbf{null} & \textbf{selected} \\
\hline
\multirow{12}{*}{$d = 5$}
 & \multirow{4}{*}{Normal}
   & 500   & 1731.8 & 9116.6 & 6716.1 & 1693.9 & 9099.7 & 6697.6 \\
 & & 1000  & 3449.3 & 18267.4 & 9845.5 & 3405.1 & 18247.8 & 9822.0 \\
 & & 5000  & 17052.2 & 91232.4 & 17278.6 & 16993.6 & 91206.3 & 17234.4 \\
 & & 10000 & 34054.5 & 182303.9 & 34037.8 & 33989.6 & 182275.1 & 33986.9 \\
  \cmidrule(lr){2-9}
 & \multirow{4}{*}{t5}
   & 500   & 2779.8 & 15761.1 & 14670.7 & 2741.9 & 15744.3 & 14653.5 \\
 & & 1000  & 5568.9 & 31488.0 & 26847.5 & 5524.8 & 31468.4 & 26827.0 \\
 & & 5000  & 27820.8 & 158730.9 & 76978.0 & 27762.1 & 158704.9 & 76946.1 \\
 & & 10000 & 55771.5 & 317997.3 & 86320.7 & 55706.6 & 317968.5 & 86279.6 \\
  \cmidrule(lr){2-9}
 & \multirow{4}{*}{t7}
   & 500   & 2430.9 & 13369.0 & 11956.5 & 2393.0 & 13352.1 & 11939.0 \\
 & & 1000  & 4797.0 & 26364.6 & 21940.2 & 4752.8 & 26344.9 & 21919.5 \\
 & & 5000  & 23802.1 & 131415.3 & 35270.9 & 23743.4 & 131389.2 & 35233.9 \\
 & & 10000 & 47642.1 & 263206.3 & 47628.5 & 47577.2 & 263177.4 & 47581.2 \\
\hline 
\multirow{12}{*}{$d = 8$}
 & \multirow{4}{*}{Normal}
   & 500   & 1811.1 & 9426.0 & 7999.0 & 1760.5 & 9409.1 & 7981.3 \\
 & & 1000  & 3578.4 & 18809.8 & 14825.9 & 3519.5 & 18790.2 & 14804.7 \\
 & & 5000  & 17657.3 & 94077.2 & 18884.5 & 17579.1 & 94051.1 & 18840.7 \\
 & & 10000 & 35254.2 & 188017.9 & 35220.4 & 35167.7 & 187989.0 & 35163.9 \\
  \cmidrule(lr){2-9}
 & \multirow{4}{*}{t5}
   & 500   & 2871.9 & 16424.7 & 15881.0 & 2821.3 & 16407.9 & 15863.9 \\
 & & 1000  & 5721.2 & 32643.4 & 31008.7 & 5662.3 & 32623.8 & 30988.8 \\
 & & 5000  & 28417.1 & 162506.6 & 120038.4 & 28338.9 & 162480.6 & 120009.9 \\
 & & 10000 & 56694.0 & 323298.9 & 153982.5 & 56607.5 & 323270.1 & 153946.3 \\
  \cmidrule(lr){2-9}
 & \multirow{4}{*}{t7}
   & 500   & 2494.5 & 13527.3 & 12907.2 & 2443.9 & 13510.5 & 12890.1 \\
 & & 1000  & 4950.8 & 27136.5 & 24769.2 & 4891.9 & 27116.9 & 24749.0 \\
 & & 5000  & 24362.2 & 134338.2 & 62251.2 & 24284.0 & 134312.2 & 62218.4 \\
 & & 10000 & 48806.6 & 269491.9 & 69318.2 & 48720.1 & 269463.1 & 69274.0 \\
\hline
\end{tabular}
}
\caption{Average BIC and AIC of the 500 simulation runs for Scenario 1 with exogenous covariates with exponentially decaying covariance in a GARCH(2,1)-X model.} \label{tab:GARCH_ExpDecay_AIC}
\end{table}

\section{Real Data Analysis: Volatility of the SP500 based on economic and commodity indicators} \label{sec.real_data}

Modeling the volatility of the S\&P 500 index is of central importance in financial econometrics, with applications in risk management, asset pricing, portfolio optimization, and regulatory capital assessment. While traditional GARCH models capture time-varying volatility through its endogenous properties, they often fail to account for the influence of external economic and financial conditions. Incorporating such variables can improve financial planning, particularly during periods of market stress when endogenous dynamics alone are insufficient.

In this section, we examine whether incorporating a set of macro-financial indicators enhances volatility modeling of the S\&P 500. Specifically, we include as covariates the prices of NASDAQ and Dow Jones Industrial Average indices, as well as key commodities: crude oil, natural gas, gold, palladium, rice, steel, and wheat. These variables are economically motivated. The NASDAQ and DOW indices reflect broader market conditions and investor sentiment, and may serve as indicators of volatility in the S\&P 500. Energy and industrial commodities, such as crude oil and natural gas, influence input costs, inflation expectations, and economic uncertainty. Precious metals like gold and palladium are often considered safe-haven assets and tend to react strongly during times of market turmoil. Agricultural commodities, including rice and wheat, as well as industrial inputs like steel, provide information about supply chain pressures and global economic demand. By incorporating these heterogeneous indicators into the volatility modeling framework, we aim to capture both financial contagion and macroeconomic effects that may influence S\&P 500 volatility beyond its own endogenous behavior.

The dataset consists of daily closing prices of the exogenous covariates from January 2000 to May 2025. A GARCH(1,1)-X model was fitted under three configurations: using all available covariates (referred to as the full model), using only those covariates selected by the proposed variable selection method (selected model), and excluding all covariates (null model). The adjusted p-values obtained through the proposed selection procedure, following false discovery rate (FDR) correction, were as follows: NASDAQ (0.0464), Dow Jones (0.1428), crude oil (0.0464), gold (0.4759), natural gas (0.1300), palladium (0.1428), rice (0.3366), steel (0.0707), and wheat (0.1968). These results indicate that, among the variables considered, only the NASDAQ and crude oil prices are selected as relevant exogenous predictors of S\&P 500 volatility at 0.05 adjusted significance levels. Additionally, the adjusted p-value for steel is marginally significant, suggesting it may have some explanatory power, albeit weaker than the other two covariates.

Two versions of the selected model were evaluated: one including the NASDAQ and crude oil covariates, and another incorporating NASDAQ, crude oil, and steel prices. These were compared against both the null model (excluding all covariates) and the full model (including all covariates). The BIC and AIC values for the null model were 141704.6 and 141685.5, respectively. For the full model, the BIC and AIC were 141792.3 and 141715.9. The selected model excluding steel yielded a BIC of 141721.2 and an AIC of 141689.4, while the model including steel produced a BIC of 141729.7 and an AIC of 141691.5. Additionally, GARCH models with higher orders for the lag parameters $p$ and $q$ were tested but resulted in higher AIC and BIC values, indicating inferior performance in terms of model selection criteria.

The comparison of AIC and BIC values across the null, selected, and full GARCH-X models reveals important insights. The null model, which includes no exogenous covariates, achieves slightly lower BIC (141704.6) and AIC (141685.5) compared to the other models. The selected models incorporating NASDAQ and crude oil (with or without steel) follow very closely, and remain notably more parsimonious than the full model, which includes all covariates and performs worse under both criteria. 

The modest increase in AIC/BIC for the selected models—particularly the one including only NASDAQ and crude oil prices—indicates that these variables may still contain useful  information and improved interpretability, given that they are statistically significant in the model and the sample size is large and the penalty is very small compared to the size of the likelihood.
The model including steel prices shows slightly worse BIC and AIC than the version without it, reinforcing the interpretation of steel as only marginally informative.

\pagebreak

\section{Conclusion}\label{sec.conclusion}

This paper proposes a consistent variable selection procedure for GARCH-X models that exploits the connection between model checking and variable selection. By leveraging the false discovery rate (FDR) control method of Benjamini and Yekutieli, we construct a test-based selection rule that asymptotically recovers the true set of exogenous covariates influencing conditional volatility. Theoretical results establish the consistency of the procedure under mild conditions, including allowance for diminishing signal strength as the sample size grows.

Through extensive simulations across various model configurations, error distributions, and covariate dependence structures, we demonstrate the robustness and effectiveness of the method in finite samples. The real data application to S\&P 500 volatility illustrates the procedure applied to a real dataset. The proposed approach provides a framework for achieving model parsimony, while identifying essential external drivers of volatility and avoiding overfitting or unnecessary complexity. 

\backmatter





\bmhead{Acknowledgements}

The authors gratefully acknowledge support from the National Science Foundation under Grant No. NSF 2413081. This support made possible the research and development in this paper.












\begin{appendices}

\section{Assumptions}\label{secA1}

The following assumptions are necessary conditions for Theorem \ref{thm.consistency} and are extensively discussed in \cite{FrancqThieu2019}.
\begin{description}

\item[\textbf{A1:}] $(w_t, x_t')$ is a strictly stationary and ergodic process, and there exists $s > 0$ such that $\mathbb{E}|w_1|^s < \infty$ and $\mathbb{E}\|x_1\|^s < \infty$.

\item[\textbf{A2:}] $\mathbb{E}(w_t \mid \mathcal{F}_{t-1}) = 0$ and $\mathbb{E}(w_t^2 \mid \mathcal{F}_{t-1}) = 1$.

\item[\textbf{A3:}] $\vtheta \in \Theta$, and $\Theta$ is compact.

\item[\textbf{A4:}] For all $i \geq 1$ (and all $t \in \mathbb{Z}$), the support of the distribution of $w_{t-i} \mid \mathcal{F}_{t,i}$ is not included in $[0, \infty)$ or $(-\infty, 0]$ and contains at least three points.

\item[\textbf{A5:}] $\gamma < 0$ and $\sum_{j=1}^p \beta_j < 1$ for all $\vtheta \in \Theta$.

\item[\textbf{A6:}] There exists $s > 0$ such that $\mathbb{E}\sigma_t^s < \infty$ and $\mathbb{E}|\varepsilon_t|^s < \infty$.

\item[\textbf{A7:}] If $q > 0$, then $B_{\vtheta}(z)$ has no common root with $A_{\vtheta}^{+}(z)$ and $A_{\vtheta}^{-}(z)$; also $A_{\vtheta}^{+}(1) + A_{\vtheta}^{-}(1) \neq 0$ and $\alpha_{0q}^{+} + \alpha_{0q}^{-} + \beta_{0p} \neq 0$.

\item[\textbf{A8:}] If $c \in \mathbb{R}^r \setminus \{0\}$, then $c'x_1$ is not degenerate.

\item[\textbf{A9:}] $\mathcal{C} := \bigcup_{n=1}^{\infty} \left\{ \sqrt{n}(\vtheta - \vtheta): \vtheta \in \Theta \right\} = \prod_{i=1}^d C_i$, where $C_i = [0, \infty)$ if $\vtheta_{i} = 0$, and $C_i = \mathbb{R}$ otherwise.

\item[\textbf{A10:}] $\mathbb{E} w_t^4 < \infty$ in Cases A and B, and $\mathbb{E}|w_t|^{4+\nu} < \infty$ for some $\nu > 0$ in Cases C and D.

\item[\textbf{A11:}] For the power $\delta_0$ of the APARCH model (2): \\
In Case B: $\mathbb{E}|\varepsilon_t|^{2\delta_0 + \nu_1} < \infty$ and $\mathbb{E}\|x_t\|^{2 + \nu_1} < \infty$ for some $\nu_1 > 0$; \\
In Case D: $\mathbb{E}|\varepsilon_t|^{2\delta_0 + 8\delta_0/\nu} < \infty$ and $\mathbb{E}\|x_t\|^{2 + 8/\nu} < \infty$ with $\nu$ satisfying A10.

Here the top Lyapunov exponent \( \gamma \) associated with the APARCH-X model is defined as:
\[
\gamma := \lim_{t \to \infty} \frac{1}{t} \log \left\| C_{0t} C_{0,t-1} \cdots C_{01} \right\| \quad \text{a.s.}
\]

The matrix \( C_{0t} = [C_{0t}(i,j)] \) has the following non-zero elements:

\begin{align*}
& C_{0t}(1, 1) = \alpha_{01}^+ (\eta_t^+)^{\delta_0} + \alpha_{01}^- (\eta_t^-)^{\delta_0} + \beta_{01}, \\
& C_{0t}(1, j) = \beta_{0j} \quad \text{for } j = 2, \dots, p, \\
& C_{0t}(1, p + 2i - 3) = \alpha_{0i}^+ \quad \text{for } i = 2, \dots, q, \\
& C_{0t}(1, p + 2i - 2) = \alpha_{0i}^- \quad \text{for } i = 2, \dots, q, \\
& C_{0t}(i, i - 1) = 1 \quad \text{for } i = 2, \dots, p, \\
& C_{0t}(p + 1, 1) = (\eta_t^+)^{\delta_0}, \\
& C_{0t}(p + 2, 1) = (\eta_t^-)^{\delta_0}, \\
& C_{0t}(p + 2i - 1, p + 2i - 3) = 1 \quad \text{for } i = 2, \dots, q - 1, \\
& C_{0t}(p + 2i, p + 2i - 2) = 1 \quad \text{for } i = 2, \dots, q - 1.
\end{align*}

All other entries of \( C_{0t} \) are zero.

The polynomial \( B_\theta(z) \) corresponds to the autoregressive part of the conditional variance equation and is defined as:
$
B_\vartheta(z) = 1 - \sum_{j=1}^{q} \beta_j z^j,
$
where
   \( \beta_j \geq 0 \) are the coefficients associated with past conditional variances \( h_{t-j} \),
   \( q \) is the order of the GARCH component,
   \( z \in \mathbb{C} \) is a complex variable.

\end{description}




\end{appendices}


\clearpage

\bibliography{mybibfile}

\end{document}